\theoremstyle{theorem}
\newtheorem{teo}{Theorem}
\newtheorem{cor}{Corollary}
\newtheorem{prop}{Proposition}
\newtheorem{lem}{Lemma}
\newtheorem*{remark}{Remark}
\theoremstyle{definition}
\newtheorem{midef}{Definition}
\newcommand\numberthis{\addtocounter{equation}{1}\tag{\theequation}}
\newcommand\restr[2]{{
  \left.\kern-\nulldelimiterspace 
  #1 
  \vphantom{\big|} 
  \right|_{#2} 
  }}
\newcommand\eval[3]{{
  \left.\kern-\nulldelimiterspace 
  #1 
  \vphantom{\big|} 
  \right|_{#2}^{#3} 
  }}
\newcommand{\ind}{\operatornamewithlimits{\perp}}
\newcommand{\bb}[1]{\mathbb{#1}}
\newcommand{\e}{\ensuremath{\mathbf{e}}}
\newcommand{\sip}{\bb{P}}
\newcommand{\se}{\ensuremath{\bb{E}}}
\newcommand{\si}{{\ensuremath{\bf{1}}}}
\newcommand{\re}{\ensuremath{\mathbb{R}}}
\newcommand{\prob}[1]{\ensuremath{\sip\! \left[ #1 \right]}}
\newcommand{\probc}[2]{\ensuremath{\sip\! \left[ #1 \, | #2 \right]}}
\newcommand{\esp}[1]{\ensuremath{\se\! \left[ #1 \right]}}
\newcommand{\espc}[2]{\ensuremath{\se\! \left[ #1 | #2 \right]}}
\newcommand{\indi}[1]{\si_{#1}}
\begin{document}

\vspace*{0.35in}

\begin{flushleft}
{\Large
\textbf\newline{Bayesian nonparametric estimation of survival functions with multiple-samples information}
}
\newline
\\
Alan Riva Palacio and 
Fabrizio Leisen
\\
\bigskip
\bf{School of Mathematics, Statistics and Actuarial Sciences, University of Kent}\\
\bf{Sibson Building, Canterbury, Kent CT2 7FS}

\end{flushleft}

\section*{Abstract}
In many real problems, dependence structures more general than exchangeability are required. For instance, in some settings partial exchangeability is a more reasonable assumption. For this reason, vectors of dependent Bayesian nonparametric priors have recently gained popularity. They provide flexible models which are tractable from a computational and theoretical point of view. In this paper, we focus on their use for estimating multivariate survival functions. Our model extends the work of Epifani and Lijoi (2010) to an arbitrary dimension and allows to model the dependence among survival times of different groups of observations. Theoretical results about the posterior behaviour of the underlying dependent vector of completely random measures are provided. The performance of the model is tested on a simulated dataset arising from a distributional Clayton copula. 

\section{Introduction}

Bayesian nonparametric modelling in \textit{survival analysis} problems often relies on the assumption that the times observed are \textit{exchangeable}, see for example \cite{doksum} and \cite{ishwaranjames}. Such assumption fails to hold when we consider events that are pooled from different dependent scenarios. For example, consider patients under the same treatment but in different hospitals. The survival times of patients from the same hospital could be assumed exchangeable. On the other hand, this is not a reasonable assumption when we consider patients from different hospitals since factors specific to each hospital might exert significant influence. 
In general, we can consider that the data is originated from $d$ different but related studies. Formally, we have $d$ sets of observations where the exchangeability assumption is assumed only within each set. In the above cases, it would be more appropriate to assume  a form of dependence called \textit{partial exchangeability} (see Section 2 for a formal account on exchangeability and partial exchangeability). This motivates the extension of Bayesian nonparametric models into a partially exchangeable setting where multiple-samples information could be used. 
\\
\\
Applications of Bayesian nonparametrics in survival analysis go back, for example, to \cite{doksum} and \cite{fergphadia}, who used non-decreasing independent increment processes to construct random survival functions. \cite{dykstra} and \cite{loweng} focused on random hazard rates. More recently, \cite{ishwaranjames} used a general class of random hazard rate-based models, and \cite{nieto} used a general short-term and long-term hazard ratios model.
There is an ongoing effort in Bayesian nonparametrics to propose flexible dependent random probability measures as set forth with the seminal work of \cite{MacEachern}. In survival analysis, for example, \cite{deiorio} introduced a model based on a dependent Dirichlet process. In a partial exchangeable setting, survival analysis models have been used, for example, in \cite{epilijoi} where a dependent two-dimensional extension of the  \textit{neutral to the right} (NTR) model was introduced and in \cite{NL14} where a dependent vector of hazard rates was constructed. \cite{GL2016} introduced a new class of vectors of dependent completely random measures, called \textit{Compound Random Measures}, where the dependence contribution is modelled with a parametric distribution. 
\\
\\
In the seminal work of \cite{doksum}, the NTR model for survival functions was introduced. The NTR model can be expressed in terms of a Completely Random Measure (CRM) $\mu$. This means that when $\mu$ is evaluated at pairwise disjoint sets it gives rise to mutually independent nonnegative random variables. We say that a positive random variable $Y$ has a NTR distribution given by a CRM $\mu$, denoted $Y\sim \text{NTR}(\mu)$, if
$$
S(t)=\probc{Y>t}{\mu}=e^{-\mu(0,t]},
$$
where $\mu$ is such that $$\lim_{t \to \infty} \mu (0,t]=\infty.$$
NTR distributions have several appealing properties, including the independence of normalized increments and posterior conjugacy for censored to the right data. An extension of the NTR model into a partially exchangeable setting was given by \cite{epilijoi} for the $2-$dimensional case. In the present work, we follow the approach of  \cite{epilijoi} and focus on models based on a $d$-dimensional  \textit{ vector of completely random measures} (VCRM). More precisely, we consider  $d$ 
collections of survival times $\{
Y^{(1)}_j\}_{j=1}^\infty , \ldots , \{Y^{(d)}_j\}_{j=1}^\infty$ such that, for $\pmb{t}=(t_1,\ldots , t_d)\in (\re^+)^d$,
\begin{align}\label{model}
&S(\pmb{t})=\probc{Y_{i_1}^{(1)}>t_1,\dots, Y_{i_d}^{(d)}>t_d}{(\mu_1,\dots , \mu_d)}=
e^{-\mu_1(0,t_1]-\dots -\mu_d(0,t_d]} ,
\end{align}
with arbitrary $i_1, \ldots , i_d \in \mathbb{N}\setminus \{0\}$.
This model is convenient for modeling data where the dependence among the entries of the VCRM $\pmb{\mu}=\left( \mu_1, \ldots , \mu_d \right)$ accounts for dependence among the multiple-samples in a partially exchangeable setting.  Furthermore, marginally we recover the NTR model, namely
\begin{equation}\label{NTRuno}
Y^{(i)}_1, \ldots , Y^{(i)}_{n_i}
\stackrel{\text{i.i.d.}}{\sim}
\text{NTR}(\mu_i)
\end{equation}
with $i\in \{1, \ldots , d\}$, $n_i \in \mathbb{N} \setminus \{0\}$. In \eqref{NTRuno} we want to model the dependence of the VCRM $\pmb{\mu}$ in a way that allows us to fix a marginal behavior so to exploit the fact that marginally we recover a NTR model; \textit{L\'evy copulas} are a natural framework to model the dependence structure of VCRM's in such way. 
\\

In this paper we provide a posterior characterization for the above model, see Theorem 1. Similarly to \cite{epilijoi} for 2-dimensional setting, we show that the posterior distribution corresponds to a survival function of the type as in \eqref{model} leading to a conjugacy property. Extensions of some results in \cite{epilijoi} are also provided. We would like to stress that the derivation of such results are not trivial when considering an arbitrary dimension. In particular, Proposition 1 gives a general expression for the Laplace exponent when a L\'evy copula is considered to set the dependence of the VCRM underlying the $d-$dimensional NTR model; Proposition 3 gives an alternative characterization of the multivariate NTR. Furthermore, other theoretical results are proved in order to facilitate the calculation of posterior means when the inferential exercise is implemented. Finally, we illustrate the methodology on a synthetic dataset. 
\\
\\
The paper is organized as follows: Section 2 presents the preliminary notions which are needed in this work. In Section 3 we extend some results in \cite{epilijoi} to the multivariate setting. In particular, we state the posterior characterization of the model and provide some useful corollaries for implementing the posterior inference. In Section 4, an application with synthetic data is illustrated. All the proofs can be found in the appendix. 
\section{Preliminaries}
In this section, we provide some preliminaries about exchangeability, partial exchangeability and vectors of completely random measures which are the building blocks of our Bayesian nonparametric proposal. Furthermore, we will illustrate the concept of a positive L\'{e}vy copula which is useful to model the dependence structure between the components of a vector of completely random measures.

\subsection{Exchangeability and Partial exchangeability}
\noindent Let $\mathbb{Z}$ be a complete and separable metric space, with corresponding Borel $\sigma$-algebra $\mathcal{Z}=\mathcal{B}(\mathbb{Z})$
\begin{midef}
A collection of random variables $\{Z_i \}_{i=1}^\infty$ in $\mathbb{Z}$ is exchangeable if for any permutation $\pi$ of $\{1,\dots,m\}$ we have that
$$
\big\{
Z_1,  \ldots , Z_m
\big\}
\stackrel{\text{d}}{=}
\big\{
Z_{\pi(1)},  \ldots , Z_{\pi(m)}
\big\}.
$$
\end{midef}
\noindent
As highlighted in the Introduction, in several problems the exchangeability assumption appears far too restrictive. In particular, we considered $d$ groups of observations where the order in which they are collected within each group is irrelevant. To describe this setting we resorted to the notion of partial exchangeability, as set forth by \cite{finetti}, that formalizes
the idea of partitioning the entire set of observations into a certain number of classes, say
d, in such a way that exchangeability may be reasonably assumed within each class. For
ease of exposition, we confine ourselves to consider the case where d = 2.
\begin{midef}
The collection of random vectors 
$$
\left\{ \left( Z_i^{(1)}, Z_i^{(2)} \right) \right\}_{i=1}^\infty
$$ in $\mathbb{Z}^2$ is partially exchangeable if, for any $m_1, m_2\geq 1$ and for all  permutations $\pi_1$ and $\pi_2$ of $\{1,\dots,m_1\}$ and $\{1,\dots,m_2\}$ respectively, we have that 
$$
\big\{
Z_1^{(1)},\ldots,Z_{m_1}^{(1)},Z_1^{(2)},\ldots,Z_{m_2}^{(2)}
\big\}
\stackrel{\text{d}}{=}
\big\{
Z_{\pi_1(1)}^{(1)},\ldots,Z_{\pi_1(m_1)}^{(1)},Z_{\pi_2(1)}^{(2)},\ldots,Z_{\pi_2(m_2)}^{(2)}
\big\}.
$$
\end{midef}
\noindent
\subsection{Vectors of completely random measures}
\noindent Given a complete and separable metric space $\mathbb{X}$, with corresponding Borel $\sigma$-algebra $\mathcal{X}=\mathcal{B}(\mathbb{X})$, we call a measure $\mu$ on $(\mathbb{X},\mathcal{X})$ boundedly finite if $\mu (A)<\infty$ for any bounded set $A\in \mathcal{X}$. A random measure is a measurable function from a probability space $(\Omega, \mathcal{F}, \mathbb{P})$ onto $(\mathbb{M}_\mathbb{X},\mathcal{M}_\mathbb{X})$ which is the measure space formed by $\mathbb{M}_{\mathbb{X}}$, the space of boundedly finite measures on $(\mathbb{X},\mathcal{X})$, and its corresponding Borel $\sigma$-algebra $\mathcal{M}_\mathbb{X}$.
In particular we will focus on the class of \textit{completely random measures} as introduced in \cite{kingman}.
\begin{midef}
A random measure $\mu$ on a complete and separable metric space $\mathbb{X}$ with corresponding Borel $\sigma$-algebra $\, \mathcal{X}=\mathcal{B}(\mathbb{X})$ is called a completely random measure (CRM) if for any collection of disjoint sets $\{A_1, \dots, A_n \}\subset \mathcal{X}$ the random variables $\mu(A_1),\dots , \mu(A_n)$ are mutually independent.
\end{midef}
\noindent A CRM $\mu$ has the following representation \citep{kingman}, 
$$\mu =\mu_d +  \mu_r + \mu_{fl},$$
where $\mu_d$ is a deterministic measure, $\mu_{fl}$ is a measure that consists on jumps with possibly random jump heights but fixed jump locations, and 
$$\mu_r=\sum_{i=1}^\infty W_i\delta_{X_i},$$ 
where for $i\in \{1,2,\dots \}$  $X_i  \in \mathbb{X}$ are random  jump locations and $W_i \in \re^+ $ are random jump heights. The measures $\mu_d$, $\mu_{fl}$ and $\mu_r$ are mutually independent. In particular, $\mu_r$ is again a CRM and is characterized by the following Laplace transform
\begin{equation}\label{lap1}
\esp{e^{-\lambda\mu_r(A)}}=e^{-\int_{\re^+\times A}(1-e^{-\lambda s})\nu(\mathrm{d}s,\mathrm{d}x)},
\end{equation}
where $\lambda>0$ and $\nu$ is a measure on $\re^+ \times\mathbb{X}$ such that
\begin{align*}
\int_{\re^+\times A} \min\{s,1\}\nu(\mathrm{d}s,\mathrm{d}x) <\infty,
\end{align*}
for any bounded set $A\in \mathcal{X}$. The measure $\nu$ is usually called the L\'{e}vy intensity of $\mu_r$. In the remainder of this work we only consider CRM's $\mu$ without fixed jump locations nor deterministic part so we take $\mu=\mu_r$ to be solely determined by (\ref{lap1}). In particular we focus on L\'{e}vy intensities $\nu$ which are \textit{homogeneous}, i.e.
$$\nu(\mathrm{d}s,\mathrm{d}x)=\rho(\mathrm{d}s)\alpha(\mathrm{d}x),$$
where $\alpha$ is a non-atomic measure on $\mathbb{X}$ referring to the jump locations and 
$\rho$ is a measure on $\re^+$ referring to the jump heights. A popular example of an homogeneous CRM is the $\sigma$-stable process given by
\begin{align}\label{sigstab}
\nu(\mathrm{d} s, \mathrm{d} x)=\frac{A\sigma s^{-1-\sigma}}{\Gamma(1-\sigma)}\mathrm{d}s\alpha(\mathrm{d}x).
\end{align}
As an illustration, we plot in Figure \ref{fig1} the associated process $\mu(0,t]$ for the $\sigma$-stable process (\ref{sigstab}) with $\alpha(\mathrm{d}x)=\mathrm{d}x$.
\begin{figure}
\centerline{
\includegraphics[width=9cm,height=6cm]{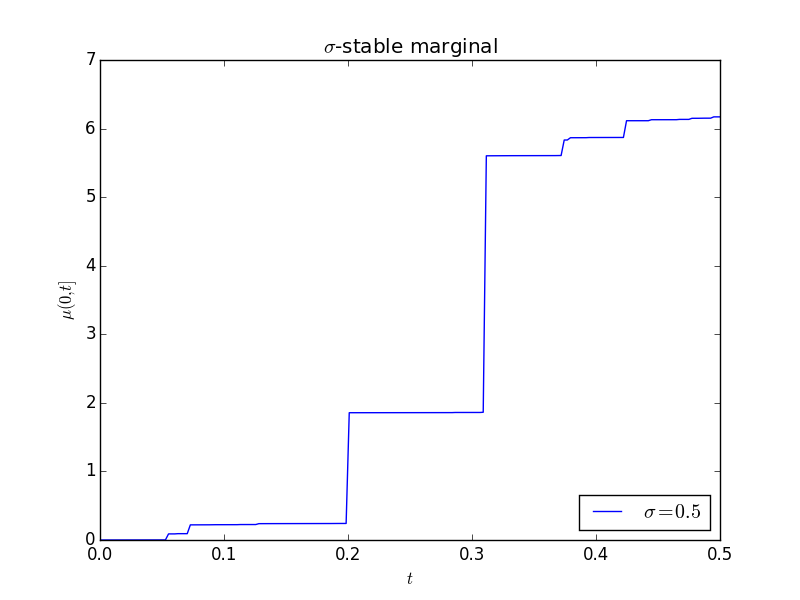}
}
\caption{Plot of $\mu(0,t]$ when a $\sigma$-stable process is considered.}\label{fig1}
\end{figure}
\\

We extend this framework to the multivariate setting by considering vectors $(\mu_1,\dots ,\mu_d)$ where each $\mu_i$ is a homogeneous CRM on $(\mathbb{X},\mathcal{X})$ with respective L\'{e}vy intensities $\bar{\nu}_j(\mathrm{d}s,\mathrm{d}x)=\nu_j(\mathrm{d}s)\alpha(\mathrm{d}x)$. Moreover we take the intensity $\alpha$ to be smooth in the sense that $\alpha((0,t])=\gamma(t)$ with $\gamma:[0,\infty)\rightarrow \re^+$ a  non-decreasing and differentiable function  such that $\gamma(0)=0$ and $\lim_{t\to \infty}\gamma(t)=\infty$; this last conditions on the limit behaviour will enable us to get, marginally, the associated NTR cumulative distributions in our models. We have that for any $A_1,\dots,A_n$ in $\mathcal{X}$, with $A_i\cap A_j=\emptyset$ for any $i\neq j$, the random vectors $(\mu_1(A_i),\dots,\mu_d(A_i))$ and
$(\mu_1(A_j),\dots, \mu_d(A_j))$ are mutually independent; furthermore, one has a multivariate analogue of the Laplace transform (\ref{lap1})
\begin{equation}\label{lap2}
\esp{e^{-\lambda_1\mu_1(A)-\dots -\lambda_d\mu_d(A)}}=e^{-\int_{(\re^+)^d\times A}(1-e^{-\lambda_1 s_1-\dots -\lambda_d s_d})\rho_d(\mathrm{d}s_1,\dots , \mathrm{d}s_d)\alpha(\mathrm{d}x)},
\end{equation}
where $\pmb{\lambda}=(\lambda_1,\dots ,\lambda_d)\in (\re^+)^d$ and $\rho_d$ is a measure on $(\re^+)^d$. In particular, we introduce the notation for the multivariate Laplace transform
\begin{align}\label{laplacenonhomog}
&\esp{e^{-\lambda_1\mu_1(0,t]-\dots -\lambda_d\mu_d(0,t]}}=e^{- \psi_t(\pmb{\lambda}) }.
\end{align}
Henceforth, $\psi_t(\pmb{\lambda})$ is called the Laplace exponent of $\pmb{\mu}=(\mu_1, \dots , \mu_d)$; in the case at hand, 
$\psi_t(\pmb{\lambda})= \gamma(t)\psi(\pmb{\lambda})$ 
where 
$\psi(\pmb{\lambda})=\int_{(\re^+)^d}(1-e^{-<\pmb{\lambda},\pmb{s}>})\rho_d(\mathrm{d}\pmb{s})$
and $<\pmb{\lambda},\pmb{s}>=\sum_{i=1}^d\lambda_is_i$ is the usual inner product in $\re^d$. Marginalizing, we have that
\begin{align*}
&\nu_i(A)=\int_A \nu_i(\mathrm{d}s)=\int_{(\re^+)^{d-1}}\rho_d(\mathrm{d}s_1,\dots,\mathrm{d}s_{i-1},A,\mathrm{d}s_{i+1},\dots, \mathrm{d}s_d).
\end{align*}
In Section 3, we use this particular kind of homogeneous and additive vector of CRM's to construct priors for survival analysis models.

\subsection{Positive L\'{e}vy copulas}
Although in this work we consider vectors of CRM's with fixed marginal behaviour, it remains to establish the dependence structure. \cite{tankov} introduced the concept of positive L\'evy  copulas which allows to construct vectors of CRM's with fixed marginals. 
\begin{midef}
A function $\mathcal{C}(\pmb{s}=(s_1,\dots,s_d)):[0,\infty)^d\rightarrow [0,\infty]$ is a positive L\`{e}vy  copula if
\begin{enumerate}[(i)]
\item $\forall\, B=[s_1,t_1]\times\dots\times[s_d,t_d]\subset [0,\infty)^d $ such that $s_1\leq t_1,\dots,s_d<t_d$ we have that
$$
\sum_{\{\pmb{v}\; :\, \pmb{v} \text{ is a vertex of B}\}} \text{sign}(\pmb{v})\mathcal{C}(\pmb{v})  \geq 0,
$$ 
with
$$
\text{sign}(\pmb{v})=\begin{cases}
                  \;\;\;1,\quad \text{if } v_k = s_k \text{ for an even number of vertices,}
                  \\
                  -1, \quad \text{if } v_k = s_k \text{ for an odd number of vertices.}
                \end{cases}
$$
\item If $\pmb{s}$ is such that $s_i=0$ for some $i\in \{1,\dots , d\}$ then $\mathcal{C}(\pmb{s})=0$.
\item Let $y_1=\dots=y_{k-1}=y_{k+1}=\dots =y_d=\infty$ and 
\\ $C_k(s)=\mathcal{C}(y_1,\dots,y_{k-1},s_k,y_{k+1},\dots,y_d)$ for $k\in\{1,\dots ,d\}$ then $C_k(s)=s$.
\end{enumerate}
\end{midef}
\noindent For example, a vector of independent CRM's is obtained with 
$$
\mathcal{C}_{\ind,d}(\pmb{s})=s_1\indi{s_2=\infty,\dots,s_d=\infty}+\dots +s_d\indi{s_1=\infty,\dots , s_{d-1}=\infty}.
$$
A vector of completely dependent CRM's, in the sense that the jumps of the stochastic vector are in a set $S$ such that whenever $\pmb{v},\pmb{u}\in S$ then either $v_i<u_i$ or $u_i<v_i$ for all $i\in\{1,\dots ,d\}$, is obtained with
$$
\mathcal{C}_{\|,d}(\pmb{s})=\min\{s_1,\dots,s_d\}.
$$
An interesting example of positive L\'{e}vy copulas is the Clayton L\'{e}vy copula 
\begin{equation}\label{clay}
\mathcal{C}_{\theta,d}(\pmb{s})=(s_1^{-\theta}+\cdots +s_d^{-\theta})^{-\frac{1}{\theta}}.
\end{equation}
The parameter $\theta$ is positive and regulates the level of dependence. The above copulas are special cases of the Clayton L\'{e}vy copula, i.e.
$$
\lim_{\theta \to 0}\mathcal{C}_{\theta,d}(\pmb{s})=\mathcal{C}_{\ind,d}(\pmb{s})\text{  and  } \lim_{\theta \to \infty}\mathcal{C}_{\theta,d}(\pmb{s})= \mathcal{C}_{\|,d}(\pmb{s}).
$$
We define the tail integral of an univariate L\'{e}vy intensity $\nu$ to be $U(x)=\int_x^\infty \nu(s)\mathrm{d}s$. In the setting of Section $\pmb{2.1}$ we use a L\'{e}vy copula $\mathcal{C}_d$ and the marginal tail integrals $U_1,\dots,U_d$ associated to $\nu_1,\dots, \nu_d$ to specify an absolutely continuous $\rho_d(\mathrm{d}\pmb{s})=\rho_d(\pmb{s})\mathrm{d}\pmb{s}$ via
\begin{align*}
U(\pmb{x})&=\int_{x_1}^\infty \dots \int_{x_d}^\infty\rho_d(\pmb{s})\mathrm{d}\pmb{s}
\\
& \quad =\int_{x_1}^\infty \dots \int_{x_d}^\infty\restr{\frac{\partial^d}{\partial u_1\cdots \partial u_d}\mathcal{C}_d(\pmb{u})}{u_1=U_1(s_1),\cdots u_d=U_d(s_d)}\nu_1(s_1)\cdots \nu_d(s_d)
\mathrm{d}\pmb{s}.
\end{align*}
Therefore, under suitable regularity conditions, we can recover the multivariate L\'{e}vy intensity from the copula and marginal intensities in the following way
\begin{equation}\label{sklar}
\rho_d(\pmb{s})=\restr{\frac{\partial^d}{\partial u_1\cdots \partial u_d}\mathcal{C}_d(\pmb{u})}{u_1=U_1(s_1),\cdots , x_d=U_d(s_d)}\nu_1(s_1)\cdots \nu_d(s_d).
\end{equation}

\begin{figure}
\centerline{
\includegraphics[width=7.5cm,height=5cm]{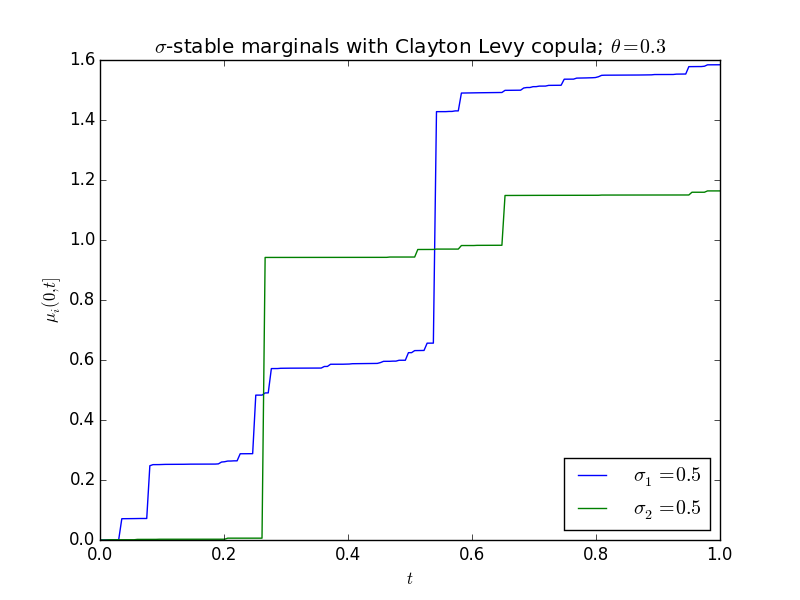}
\includegraphics[width=7.5cm,height=5cm]{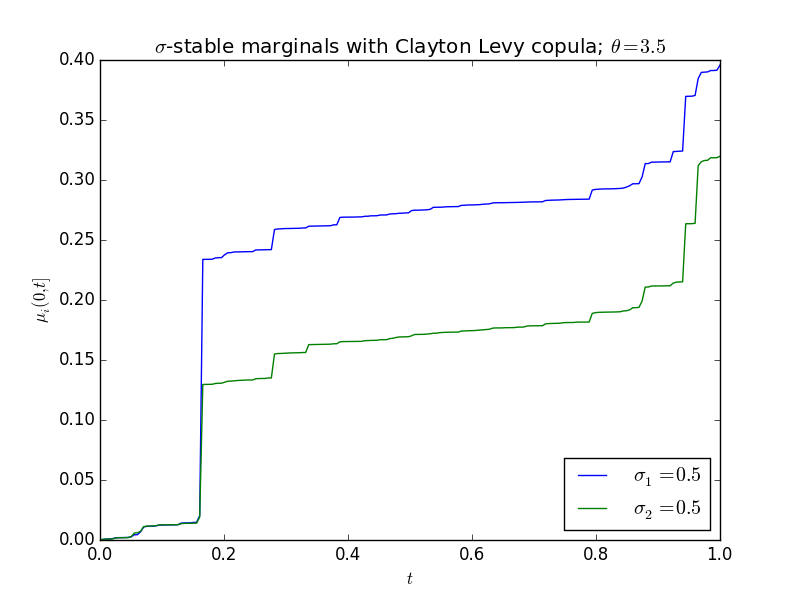}
}
\caption{Plot of dependent $\sigma$-stable processes with dependence given by Clayton L\'{e}vy copula with parameter $\theta=0.3$ (left) and $\theta=3.5$ (right).}\label{fig2}
\end{figure}
For example, consider the Clayton L\'{e}vy copula with $\sigma$-stable margins, given by (\ref{sigstab}), and $\alpha(\mathrm{d}x)=\mathrm{d}x$. Figure \ref{fig2} shows the dependence behaviour when a $2$-dimensional Clayton L\'{e}vy copula with parameter $\theta=0.3$ and $\theta=3.5$ is employed; we plot the associated stochastic processes $\mu_i(0,t]$ with $i\in\{1, 2\}$ similarly to Figure \ref{fig1}. As expected, when $\theta=0.3$, at each jumping time, the processes  have one jump weight big and one small  since we are close to the independence case (where the processes almost surely share no jumping times). On the other hand, when $\theta$ is increased to $3.5$, we can appreciate the higher dependence induced by a larger value of the copula parameter. We simulated the trajectories in Figure 2 by using Algorithm 6.15 in \cite{tankbook}, where a full treatment of the dependence structure of L\'evy intensities is also given. \cite{leisen}, \cite{LLS} and \cite{weixuan} used a L\'evy copula approach for building vectors of dependent completely random measures. 
 
\subsubsection{Working example}
If we consider the L\'{e}vy intensity arising from (\ref{sklar}) when considering the $d$-dimensional Clayton L\'{e}vy copula, (\ref{clay}), with parameter $\theta$ and $\sigma$-stable marginals, (\ref{sigstab}), with parameters $A,\, \sigma$, we obtain
\begin{align*}
\rho_{d, \theta, A, \sigma}(\pmb{s})=
\frac{A(1+\theta)(1+2\theta)\cdots (1+(d-1)\theta)\sigma^d\left( s_1 s_2 \cdots s_d \right)^{\sigma \theta -1}}{\Gamma(1-\sigma)\left( s_1^{\sigma \theta} + \dots +s_d^{\sigma \theta} \right)^{\frac{1}{\theta} +d} }.
\end{align*}
Furthermore, if we take $\theta = 1/\sigma$ we obtain the simplified L\'{e}vy intensity
\begin{align}\label{intens}
\rho_{d, A, \sigma}(\pmb{s})=
\frac{A(\sigma+1)(\sigma+2)\cdots (\sigma+d-1)\sigma}{\Gamma(1-\sigma)\left( s_1 + \dots +s_d \right)^{\sigma +d} }.
\end{align}
Such intensity corresponds to a particular family of vectors of completely random measures known as \textit{Compound Random Measures} (CoRM's) and introduced in \cite{GL2016}; the previous 
L\'{e}vy intensity arises when taking $\phi=1$ in equation (4.4) of the aforementioned paper. A convenient feature of this L\'{e}vy intensity is that, as shown in Proposition 3.1 of \cite{weixuan}, we can explicitly get the corresponding Laplace exponent
\begin{align}\label{lapexp_workex}
\psi_{d, A, \sigma}(\pmb{\lambda})=\sum_{i=1}^d \frac{\lambda_i^{\sigma+d-1}}{\prod_{j=1, \, j\neq i}^d (\lambda_i-\lambda_j)}; \quad \lambda_i \neq \lambda_j \text{ for } j\neq i,
\end{align}
where we take the appropriate limits when $\pmb{\lambda}=(\lambda_1,\dots , \lambda_d)$ is such that $\lambda_i =\lambda_j$ for distinct $i,j\in\{1,\dots,d\}$. As indicated in the remark at the end of section 3, evaluation of the Laplace exponent is necessary for the explicit calculation of the posterior mean of the survival function given censored data.
\section{Main results}
Let $d\in \mathbb{N}\setminus \{0\}$, and suppose we have $d$ collections of random variables 
\begin{equation}
 \{ \{Y^{(i)}_j\}_{j=1}^\infty \}_{i=1}^d.
\end{equation} 
We characterize the probability distribution of these random variables in terms of a vector of CRM's $\pmb{\mu}=(\mu_1,\dots, \mu_d)$. For $\pmb{t}=(t_1,\dots , t_d)\in (\re^+)^d$, let
\begin{align*}
&\probc{Y_1^{(1)}>t_{1,1}, \ldots , Y_{n_1}^{(1)}>t_{1,n_1}, \ldots,
Y_1^{(d)}>t_{d,1}, \ldots , Y_{n_d}^{(d)}>t_{d,n_d} }{(\mu_1, \ldots , \mu_d)}
\\
&=\prod_{i=1}^d\prod_{j=1}^{n_i}e^{-\mu_i (0,t_{i,j}]}. \numberthis \label{model1}
\end{align*}
We observe that under such model the random variables (\ref{model1}) are partially exchangeable and marginally follow a $NTR$ process. The dependence structure in this model can be given through the L\'{e}vy copula associated to the CRM $\pmb{\mu}$. This model extends the one in \cite{epilijoi} to an arbitrary dimension $d$.

The family of Clayton L\'{e}vy copulas is of interest because it has both the independence and complete dependence cases as limit behaviour. In the next result, we work towards finding expressions for the Laplace exponent associated to the Clayton family in such a way that the dependence structure is decoupled across dimensions. This result could be useful since, as we will see, an explicit calculation of $\psi$ is of key importance to implement the Bayesian inference in our survival analysis model.\\

Let $\rho_d(\pmb{s};\theta)$ be the L\'{e}vy intensity associated  via (\ref{sklar}) to the  Clayton L\'{e}vy copula
$\mathcal{C}_{\theta,d}$ and fixed marginal L\'{e}vy intensities $\nu_1,\dots , \nu_d$ with corresponding Laplace transforms $\psi_1, \dots , \psi_d$. We denote the vector of tail integrals corresponding to the marginal L\'{e}vy intensities as $\pmb{U}_d(\pmb{x})=(U_1(x_1),\dots,U_d(x_d))$ and fix the notation
\begin{align*}
\kappa(\theta;\pmb{\lambda}, \pmb{i})&=\lambda_{i_1}\cdots\lambda_{i_m} \int_{(\re^+)^m} \e^{-\lambda_{i_1}s_1 - \dots -\lambda_{i_m}s_m  }C_{\theta,m}(U_{i_1}(s_1), \dots , U_{i_m}(s_m))\mathrm{d}\pmb{s} ,
\end{align*}
where $d\in \mathbb{N}\setminus \{0\}$, $\pmb{\lambda}=(\lambda_1,\dots , \lambda_d)\in (\re^+)^d$, $m\in \{1,\dots , d\}$,  and $\pmb{i}=(i_1,i_2,\dots,i_m)\in \{ 1, \dots , d \}^m$ is such that $i_1< \dots< i_m$.
\begin{prop}
Suppose that $d \in \{2,3,\dots\}$ and
\begin{equation*}
\int_{\|\pmb{s} \|\leq 1}\|\pmb{s}\| \rho_d(\pmb{s};\theta)\mathrm{d}\pmb{s} <\infty , \numberthis \label{finit}
\end{equation*}
then 
\begin{align*}
\psi(\pmb{\lambda})&=\int_{(\re^+)^d}(1-\e^{-<\pmb{\lambda},\pmb{s}>})
\frac{\partial^d}{\partial u_d\cdots\partial u_1}\restr{C_{\theta,d}(\pmb{u})}{\pmb{u}=\pmb{U}_d(\pmb{s})}\nu_1(s_1)\cdots \nu_d(s_d)\mathrm{d}\pmb{s}
\\
&=
\sum_{i=1}^d\psi_i(\lambda_i)-
\sum_{\stackrel{\pmb{i}=(i_1,i_2)\in\{1,\dots,d\}^2}{i_1< i_2}}\kappa(\theta;\pmb{\lambda}, \pmb{i})+\cdots
\\
&\qquad\cdots +(-1)^d
\sum_{\stackrel{\pmb{i}=(i_1,\dots ,i_{d-1})\in \{1,\dots,d\}^{d-1}}{i_1<\dots <i_{d-1}}} \kappa(\theta;\pmb{\lambda}, \pmb{i}) 
+(-1)^{d+1}\kappa(\theta;\pmb{\lambda}, (1,\dots, d)),
\end{align*}
where $\pmb{\lambda}=(\lambda_1,\dots , \lambda_d)\in (\re^+)^d$.
\end{prop}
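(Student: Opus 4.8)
The first displayed equality is simply the Sklar-type representation \eqref{sklar} of $\rho_d(\cdot\,;\theta)$ plugged into $\psi(\pmb{\lambda})=\int_{(\re^+)^d}(1-\e^{-\langle\pmb{\lambda},\pmb{s}\rangle})\rho_d(\pmb{s};\theta)\,\mathrm{d}\pmb{s}$, so the content of the proposition is the evaluation of this integral. The plan is: (a) expand $1-\e^{-\langle\pmb{\lambda},\pmb{s}\rangle}$ by inclusion--exclusion; (b) for each resulting term integrate out the coordinates that do not appear, replacing $\rho_d(\cdot\,;\theta)$ by a lower-dimensional marginal; (c) recognise what remains as $\kappa(\theta;\pmb{\lambda},\cdot)$ by invoking the L\'evy analogue of Sklar's theorem for the marginal L\'evy intensity; (d) collect terms by cardinality.

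For (a), since $\e^{-\langle\pmb{\lambda},\pmb{s}\rangle}=\prod_{i=1}^d\bigl(1-(1-\e^{-\lambda_i s_i})\bigr)$, multiplying out gives
\[
1-\e^{-\langle\pmb{\lambda},\pmb{s}\rangle}=\sum_{\emptyset\neq J\subseteq\{1,\dots,d\}}(-1)^{|J|+1}\prod_{j\in J}(1-\e^{-\lambda_j s_j}).
\]
Substituting into $\psi(\pmb{\lambda})$ and interchanging the finite sum with the integral --- legitimate since all integrands are nonnegative and, for any single $j\in J$, $\prod_{j\in J}(1-\e^{-\lambda_j s_j})\le 1-\e^{-\lambda_j s_j}$, whose integral against $\rho_d(\cdot\,;\theta)$ equals $\psi_j(\lambda_j)<\infty$ (because the $j$th marginal of $\rho_d(\cdot\,;\theta)$ is $\nu_j$) --- shows that $\psi(\pmb{\lambda})$ is the corresponding alternating sum of the integrals $\int_{(\re^+)^d}\prod_{j\in J}(1-\e^{-\lambda_j s_j})\rho_d(\pmb{s};\theta)\,\mathrm{d}\pmb{s}$. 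For step (b), as the integrand of the $J$-term does not depend on $s_i$ for $i\notin J$, Tonelli's theorem lets us integrate those variables out and replace $\rho_d(\cdot\,;\theta)$ by its $|J|$-dimensional marginal intensity $\rho_J$. Writing $J=\{i_1<\dots<i_m\}$, inserting $1-\e^{-\lambda_{i_k}s_k}=\lambda_{i_k}\int_0^{s_k}\e^{-\lambda_{i_k}r_k}\,\mathrm{d}r_k$ for each $k$, and applying Tonelli once more, the $J$-term becomes
\[
(-1)^{|J|+1}\Bigl(\prod_{k=1}^m\lambda_{i_k}\Bigr)\int_{(\re^+)^m}\e^{-\sum_{k=1}^m\lambda_{i_k}r_k}\,U_J(r_1,\dots,r_m)\,\mathrm{d}\pmb{r},
\]
with $U_J(r_1,\dots,r_m)=\int_{r_1}^\infty\!\cdots\!\int_{r_m}^\infty\rho_J(\pmb{s})\,\mathrm{d}\pmb{s}$ the $m$-dimensional tail integral of $\rho_J$.

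For step (c): by the L\'evy analogue of Sklar's theorem applied to $\rho_J$ --- equivalently, by integrating \eqref{sklar} and using property (ii) of the copula together with the fact that the $\nu_i$ have infinite total mass, so that the omitted arguments of $\mathcal{C}_{\theta,d}$ can be sent to $\infty$, which by property (iii) and the explicit form \eqref{clay} gives the $m$-dimensional Clayton copula --- one has $U_J(r_1,\dots,r_m)=C_{\theta,m}(U_{i_1}(r_1),\dots,U_{i_m}(r_m))$; see \cite{tankov}. Hence the $J$-term is exactly $(-1)^{|J|+1}\kappa(\theta;\pmb{\lambda},(i_1,\dots,i_m))$. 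In the case $|J|=1$ one has $C_{\theta,1}(u)=u$, and the same computation yields $\kappa(\theta;\pmb{\lambda},(i))=\lambda_i\int_0^\infty\e^{-\lambda_i r}U_i(r)\,\mathrm{d}r=\int_0^\infty(1-\e^{-\lambda_i r})\nu_i(r)\,\mathrm{d}r=\psi_i(\lambda_i)$. For step (d), grouping the $J$-terms according to $|J|=m$ for $m=1,\dots,d$ reproduces precisely the alternating sum in the statement.

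The Tonelli interchanges and the inclusion--exclusion bookkeeping are routine once one knows $\psi(\pmb{\lambda})<\infty$; this follows from hypothesis \eqref{finit}, which controls the part of the integral over $\{\|\pmb{s}\|\le1\}$ via $1-\e^{-\langle\pmb{\lambda},\pmb{s}\rangle}\le\|\pmb{\lambda}\|\,\|\pmb{s}\|$, together with the integrability of $\rho_d(\cdot\,;\theta)$ over $\{\|\pmb{s}\|>1\}$, which is inherited from the one-dimensional marginals $\nu_i$. The step I expect to require the most care is the identification of the $J$-marginal tail integral of $\rho_d(\cdot\,;\theta)$ with $C_{\theta,m}(U_{i_1}(\cdot),\dots,U_{i_m}(\cdot))$: this is exactly the place where the stability of the Clayton family under taking margins is essential, and it is what makes the dependence structure \emph{decouple across dimensions} in the stated form.
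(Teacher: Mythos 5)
Your proof is correct, but it follows a genuinely different route from the paper. The paper computes $\psi(\pmb{\lambda})$ by iterated integration by parts in each coordinate, which produces the recursion $a_{r,d}(\pmb{\lambda})=a_{r,d-1}(\pmb{\lambda}_{-(r+1)})+a_{r+1,d}(\pmb{\lambda})$ for a family of auxiliary integrals $a_{r,m}$ involving the successive mixed partials of $\mathcal{C}_{\theta,d}$, and then establishes a system of $d+1$ identities by induction on the dimension, with the base case $d=2$ imported from Proposition 1 of Epifani and Lijoi; Proposition 1 is read off from the first identity. You instead expand $1-\e^{-\langle\pmb{\lambda},\pmb{s}\rangle}$ by inclusion--exclusion over subsets $J$, integrate out the coordinates not in $J$ (Tonelli, with finiteness of each term dominated by a marginal Laplace exponent), rewrite each $J$-term through the $m$-dimensional tail integral $U_J$, and identify $U_J(r_1,\dots,r_m)=C_{\theta,m}(U_{i_1}(r_1),\dots,U_{i_m}(r_m))$ via the L\'evy--Sklar representation together with the margin-stability of the Clayton family, so that each term is exactly $(-1)^{|J|+1}\kappa(\theta;\pmb{\lambda},\pmb{i})$. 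Your argument is shorter, avoids both the induction and the reliance on the two-dimensional result, and makes explicit where the Clayton structure enters; note, however, that the reduction of $\mathcal{C}_{\theta,d}$ to $C_{\theta,m}$ requires $U_j(0+)=\infty$ for the omitted coordinates (infinite-activity marginals), an assumption the paper also uses implicitly when its boundary terms at $s_1=0$ in the integration by parts are identified with the $(d-1)$-dimensional integrals $a_{0,d-1}$, so this is a shared hypothesis rather than a gap. The paper's recursion, on the other hand, yields the whole ladder of intermediate identities for $a_{r,d}$, which is more than the statement itself needs.
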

\noindent We refer to the Appendix \ref{proof1} for the proof. We  incorporate the L\'{e}vy exponent $\psi$ in the multivariate survival analysis setting of (\ref{model1}),
in the next result. We introduce the notation
\begin{equation*}
\nu_{i_1 ,\dots ,i_h}
(s_{i_1}, \dots , s_{i_h})
=\int_0^\infty \cdots \int_0^\infty \rho_d(\pmb{s})\prod_{j\not \in \{i_1,\dots ,i_h\} } \mathrm{d}s_{j}
\end{equation*}
for $h\in \{1,\dots , d\}$ and distinct $i_1,\dots,i_h\in\{1,\dots ,d\}$; and denote $\psi_{i_1, \cdots ,i_h}$ for the respective Laplace exponents.
\begin{prop}
In the context of (\ref{model1}), let $\pmb{1}=(1,\dots, 1)$. For $t_1\leq \cdots \leq t_d$ and $i_1,\dots,i_d\in\{1,\dots ,d\}$ such that $t_{i_1}\leq \dots \leq t_{i_d}$ then
\begin{align}
\prob{Y^{(1)}>t_1,\dots, Y^{(d)}>t_d}&=\nonumber\\\e^{-\gamma(t_{i_1})\psi(\pmb{1})}&\e^{-[\gamma(t_{i_2})-\gamma(t_{i_1})]\psi_{i_2,\dots, i_d}(\pmb{1})}\cdots \e^{-[\gamma(t_{i_d})-\gamma(t_{i_{d-1}})]\psi_{i_d}(\pmb{1})}.
\end{align}
\end{prop}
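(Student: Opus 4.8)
The plan is to reduce the left–hand side to a joint Laplace transform of the VCRM $\pmb{\mu}$, slice each interval $(0,t_i]$ into the disjoint layers determined by the ordered times, and then invoke the two structural facts about homogeneous additive vectors of CRM's recalled in Section~2: the independence of $(\mu_1(A),\dots,\mu_d(A))$ across disjoint sets $A$, and the multiplicative form $\psi_t(\pmb{\lambda})=\gamma(t)\psi(\pmb{\lambda})$ of the Laplace exponent. First, taking $n_i=1$ and $t_{i,1}=t_i$ in \eqref{model1} and integrating out $\pmb{\mu}$,
\[
\prob{Y^{(1)}>t_1,\dots,Y^{(d)}>t_d}=\esp{\e^{-\mu_1(0,t_1]-\cdots-\mu_d(0,t_d]}}.
\]
Let $i_1,\dots,i_d$ be a permutation with $t_{i_1}\leq\cdots\leq t_{i_d}$ and set $t_{i_0}:=0$. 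Since $(0,t_{i_\ell}]=\bigcup_{k=1}^{\ell}(t_{i_{k-1}},t_{i_k}]$ is a disjoint union, substituting and interchanging the order of summation gives
\[
\sum_{i=1}^d\mu_i(0,t_i]=\sum_{\ell=1}^d\sum_{k=1}^{\ell}\mu_{i_\ell}(t_{i_{k-1}},t_{i_k}]=\sum_{k=1}^d\Big(\sum_{\ell=k}^d\mu_{i_\ell}\Big)(t_{i_{k-1}},t_{i_k}].
\]

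Next, the intervals $(t_{i_0},t_{i_1}],\dots,(t_{i_{d-1}},t_{i_d}]$ being pairwise disjoint, the independence property stated after \eqref{lap2}, applied layer by layer, lets us factor
\[
\esp{\e^{-\sum_{i=1}^d\mu_i(0,t_i]}}=\prod_{k=1}^d\esp{\exp\Big(-\sum_{\ell=k}^d\mu_{i_\ell}(t_{i_{k-1}},t_{i_k}]\Big)}.
\]
It remains to evaluate each factor. The marginal vector $(\mu_{i_k},\dots,\mu_{i_d})$ of $\pmb{\mu}$ is again a homogeneous additive VCRM: putting $\lambda_j=0$ for $j\notin\{i_k,\dots,i_d\}$ in \eqref{lap2} and integrating out the corresponding jump coordinates identifies its L\'evy intensity as $\nu_{i_k,\dots,i_d}(\mathrm{d}s_{i_k},\dots,\mathrm{d}s_{i_d})\,\alpha(\mathrm{d}x)$, hence its Laplace exponent on $(0,t]$ is $\gamma(t)\psi_{i_k,\dots,i_d}(\cdot)$ in the notation introduced just before the statement. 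Because $\alpha$ is additive with $\alpha((0,t])=\gamma(t)$, one has $\alpha((t_{i_{k-1}},t_{i_k}])=\gamma(t_{i_k})-\gamma(t_{i_{k-1}})$, so evaluating the Laplace transform \eqref{laplacenonhomog} of this sub-vector at $\pmb{\lambda}=\pmb{1}$ on the interval $(t_{i_{k-1}},t_{i_k}]$ yields
\[
\esp{\exp\Big(-\sum_{\ell=k}^d\mu_{i_\ell}(t_{i_{k-1}},t_{i_k}]\Big)}=\e^{-[\gamma(t_{i_k})-\gamma(t_{i_{k-1}})]\,\psi_{i_k,\dots,i_d}(\pmb{1})}.
\]
For $k=1$ this is $\e^{-\gamma(t_{i_1})\psi(\pmb{1})}$ since $\gamma(t_{i_0})=\gamma(0)=0$ and $\psi_{i_1,\dots,i_d}=\psi$; for $k=d$ it is $\e^{-[\gamma(t_{i_d})-\gamma(t_{i_{d-1}})]\psi_{i_d}(\pmb{1})}$. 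Multiplying the $d$ factors gives exactly the asserted product.

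The argument is almost entirely bookkeeping, and I do not expect a genuine obstacle. The one place that must be written with care is the identification in the last display: that passing to the sub-vector $(\mu_{i_k},\dots,\mu_{i_d})$ corresponds to marginalizing $\rho_d$ over the coordinates outside $\{i_k,\dots,i_d\}$, so that the contribution of layer $k$ is governed by the marginalized Laplace exponent $\psi_{i_k,\dots,i_d}$, together with keeping the labels aligned with the ordering of the $t_i$ so that the nested family $\{i_1,\dots,i_d\}\supset\{i_2,\dots,i_d\}\supset\cdots\supset\{i_d\}$ appears in the correct order.
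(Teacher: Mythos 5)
Your proof is correct and follows essentially the same route as the paper's: write the probability as the joint Laplace transform of $\pmb{\mu}$, split each $(0,t_{i_\ell}]$ into the layers $(t_{i_{k-1}},t_{i_k}]$, factor by independence over disjoint sets, and evaluate the $k$-th factor via the marginalized Laplace exponent $\psi_{i_k,\dots,i_d}$ together with $\psi_t(\pmb{\lambda})=\gamma(t)\psi(\pmb{\lambda})$. The only difference is that you spell out the marginalization and bookkeeping steps that the paper's three-line proof leaves implicit.
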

\noindent 
We refer to the Appendix \ref{proof2} for the proof. This result showcases the importance of the Laplace exponent $\psi$ for calculating probabilities in the model and the impact of the function $\gamma(t)$, related to the time depending part of the  Laplace exponent, in the survival function. In Section 4, we will show that the availability of the Laplace exponent is also of main importance to implement the Bayesian inference for the model. The model we are working on generalizes to arbitrary dimension the classic model of \cite{doksum}. We present a multivariate extension of Theorem 3.1 in \cite{doksum}, which relates our model with the notion of neutrality to the right. Let $F$ be a $d$-variate random distribution function on $(\re^+)^d$ and, for a $d$-variate vector of CRM's $\pmb{\mu}=(\mu_1,\dots , \mu_d)$, denote $\mu_{i}(t)=\mu_i\left((0,t] \right)$ with $i\in\{1,\dots ,d\}$. Then, we have the following multivariate extension to Theorem 3.1 in \cite{doksum} and Proposition 4 in \cite{epilijoi}.
\begin{prop}
$F(\pmb{t}=(t_1,\dots,t_d))$ has the same distribution as $$[1-\e^{-\mu_1(t_1)}]\cdots[1-\e^{-\mu_d(t_d)}]$$
for some $d$-variate CRM $\pmb{\mu}=(\mu_1,\dots , \mu_d)$ if and only if for $h\in\{1,2,\dots\}$ and vectors $\pmb{t}_1=(t_{1,1},\dots, t_{d,1}),\dots,$ $\pmb{t}_h=(t_{1,h},\dots,
t_{d,h}) $ with 
$t_{0,i}=0<t_{1,i}<\cdots <t_{d,i}$ and 
$t_{j,0}=0<t_{j,1}<\cdots <t_{j,h}$, there exists $h$ independent random vectors $(V_{1,1},\dots V_{d,1}),\dots , (V_{1,h},\dots V_{d,h})$ such that
\begin{align*}
&(F(\pmb{t}_1),\dots ,F(\pmb{t}_h))\stackrel{d}{=}
\\&
\;
\left(
V_{1,1}\cdots V_{d,1},[1-\bar{V}_{1,1} \bar{V}_{1,2}]\cdots [1-\bar{V}_{d,1}\bar{V}_{d,2}]
,\dots , [1-\prod_{j=1}^h\bar{V}_{1,j}]\cdots [1-\prod_{j=1}^h\bar{V}_{d,j}]
\right), \numberthis \label{ntr}
\end{align*}
where $\bar{V}_{i,j}=1-V_{i,j}$ with $i\in \{1,\dots,d\}$ and $j\in \{1,\dots ,h\}$.
\end{prop}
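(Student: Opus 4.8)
The statement is an equivalence, and the plan is to prove the two implications separately, adapting to arbitrary $d$ the argument behind Theorem 3.1 in \cite{doksum} and Proposition 4 in \cite{epilijoi}. For the ``if'' part, suppose $F(\pmb t)\deq\prod_{c=1}^d[1-\e^{-\mu_c(t_c)}]$ for a $d$-variate CRM $\pmb\mu=(\mu_1,\dots,\mu_d)$, and fix a grid as in the statement. I would set
\[
V_{c,j}:=1-\exp\!\big\{-\big(\mu_c(t_{c,j})-\mu_c(t_{c,j-1})\big)\big\},\qquad c\in\{1,\dots,d\},\ j\in\{1,\dots,h\},
\]
with $\mu_c(t_{c,0}):=0$. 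A telescoping product gives $\prod_{j=1}^k\bar V_{c,j}=\e^{-\mu_c(t_{c,k})}$ for every $c$ and $k$, so that $\prod_{c=1}^d[1-\prod_{j=1}^k\bar V_{c,j}]=\prod_{c=1}^d[1-\e^{-\mu_c(t_{c,k})}]=F(\pmb t_k)$, which is the right-hand side of (\ref{ntr}). Each $(V_{1,j},\dots,V_{d,j})$ is a measurable function of the increments of the vector-valued process $(\mu_1(\cdot),\dots,\mu_d(\cdot))$ over the block of time-intervals attached to the $j$-th node of the grid; a direct check, using that $\pmb\mu$ assigns independent values to disjoint regions together with the ordering built into the grid, shows that the vectors $(V_{1,1},\dots,V_{d,1}),\dots,(V_{1,h},\dots,V_{d,h})$ are mutually independent, which completes this direction.

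For the ``only if'' part, assume that for every admissible grid the representation (\ref{ntr}) holds for some mutually independent vectors $(V_{1,j},\dots,V_{d,j})$, $j=1,\dots,h$. I would first recover the marginals: fixing $c$ and letting $t_{c'}\to\infty$ for all $c'\neq c$ along a grid whose $c'$-columns are held constant (and the $V$'s chosen consistently), the limiting univariate random distribution function $F_c(t):=\lim_{t_{c'}\to\infty,\,c'\neq c}F(\pmb t)$, with $t$ in the $c$-th slot, inherits from (\ref{ntr}) a univariate neutral-to-the-right structure, so Theorem 3.1 in \cite{doksum} provides a univariate CRM $\mu_c$ with $F_c(t)\deq 1-\e^{-\mu_c(t)}$. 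For the joint law, I would use that in (\ref{ntr}) the $c$-th factor of $F(\pmb t_k)$ depends only on $\bar V_{c,1},\dots,\bar V_{c,k}$, whose partial products already carry the law of $\e^{-\mu_c(t_{c,k})}$, while the mutual independence of the layer vectors encodes precisely the increment-independence of a $d$-variate CRM. Finally I would run a Kolmogorov extension over nested refinements of the grid — verifying that refining the time points splits each $\bar V_{c,j}$ multiplicatively and that the laws of the layer vectors form a projective family — to obtain an $(\re^+)^d$-valued increasing process $(\mu_1(t),\dots,\mu_d(t))_{t\ge 0}$ with independent increments, no fixed jump locations and no deterministic part, i.e.\ a $d$-variate CRM, for which $F(\pmb t)\deq\prod_{c=1}^d[1-\e^{-\mu_c(t_c)}]$.

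The ``if'' direction is essentially bookkeeping; I expect the real difficulty to lie in the converse, specifically in the simultaneous-refinement consistency step. The delicate point is to align the layer decompositions across all $d$ coordinates at once, so that the families of laws of the vectors $(V_{1,j},\dots,V_{d,j})$ remain projective when every coordinate grid is refined, and then to check that the limiting process has independent increments jointly in all coordinates — not merely coordinate by coordinate — and satisfies the structural requirements of a CRM. This is exactly where the argument goes beyond the two-dimensional case of \cite{epilijoi}: the interaction between the product over $c$ and the telescoping over $j$ in (\ref{ntr}) must be tracked with care, and this is the source of the non-triviality pointed out in the introduction.
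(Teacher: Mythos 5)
Your first implication (from the CRM representation to the layer vectors) coincides with the paper's argument: the same definition $V_{i,j}=1-\e^{-[\mu_i(t_{i,j})-\mu_i(t_{i,j-1})]}$, the same telescoping identity $1-\prod_{j\le k}\bar V_{i,j}=1-\e^{-\mu_i(t_{i,k})}$, and, like the paper, you dispose of the independence of the layer vectors with a brief appeal to the independence of $\pmb{\mu}$ over disjoint sets, so there is nothing to flag there beyond what the paper itself leaves implicit.

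The converse is where you depart from the paper, and where your attempt has a genuine gap. The paper needs no Kolmogorov extension and no refinement or projectivity argument: it defines the candidate process pathwise, $\mu_i(t):=-\log\bigl(1-F_i(t)\bigr)$, as a deterministic functional of the (marginals of the) given random distribution function on the same probability space; marginalizing in \eqref{ntr} and applying Theorem 3.1 of \cite{doksum} coordinatewise gives $F_i\sim \mathrm{NTR}(\mu_i)$ with each $\mu_i$ a CRM, and then the observation that $\bigl(\mu_1(t_{1,j})-\mu_1(t_{1,j-1}),\dots,\mu_d(t_{d,j})-\mu_d(t_{d,j-1})\bigr)\deq\bigl(-\log(1-V_{1,j}),\dots,-\log(1-V_{d,j})\bigr)$, combined with the assumed mutual independence of the layer vectors, delivers the independence of the vector increments, i.e.\ the $d$-variate CRM property, in one line. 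In your plan the corresponding content --- projective consistency of the layer-vector laws under simultaneous refinement of all $d$ coordinate grids, and joint (not merely coordinatewise) independence of increments of the limiting process --- is exactly the step you yourself flag as ``the real difficulty'' and never carry out, so as written the converse is a programme rather than a proof. Note also a structural obstacle to your route: the laws of the layer vectors cannot be read off from \eqref{ntr} directly, since at each grid node only the product over the $d$ coordinates is observed, so setting up your projective family would force you through the marginals anyway --- at which point the paper's pathwise definition $\mu_i=-\log(1-F_i)$ already finishes the argument and makes the extension machinery unnecessary.
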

\noindent We refer to the Appendix \ref{proof3} for the proof. We now establish some notation in order to address the posterior distribution arising from (\ref{model1}) when some survival data is available. Let $\bm{Y}_{n_i}^{(i)}=\left(Y^{(i)}_1,\dots,Y^{(i)}_{n_i}\right)$, $i=1,\dots,d$, be $d$ groups of observations that come from the distribution given by
\begin{align*}
\probc{\bm{Y}_{n_1}^{(1)}>\bm{t}_{1,n_1}, \dots, \bm{Y}_{n_d}^{(d)}>\bm{t}_{d,n_d}}{(\mu_1,\dots ,\mu_d)}= \prod_{i=1}^d\prod_{j=1}^{n_i}
\e^{-\mu_i(0,t_{i,j}]},
\end{align*}
where $\bm{t}_{i,n_i}=\left( t_{i,1},\dots,t_{i,n_i}\right)$ and the event $\{\bm{Y}_{n_i}^{(i)}>\bm{t}_{i,n_i}\}$ corresponds to the event $\{Y^{(i)}_1>t_{i,1},\dots,Y^{(i)}_{n_i}>t_{i,n_i}\}$. Let $c^{(1)}_1,\dots, c^{(1)}_{n_1},\dots,c^{(d)}_1,\dots ,c^{(d)}_{n_d} $ be their respective censoring times; therefore, the set of  censored data is the following
$$\pmb{D}=\bigcup_{i=1}^d\{(T^{(i)}_j,\delta^{(i)}_j)\}_{j=1}^{n_i},$$
where $T^{(i)}_j=\min\{Y^{(i)}_j,c^{(i)}_j\}$ and  $\delta^{(i)}_j=\mathbbm{1}_{(0,c^{(i)}_j]}\left(Y^{(i)}_j \right)$. The number of exact observations is $n_e=\sum_{i=1}^d\sum_{j=1}^{n_i}\delta^{(i)}_j$ and the number of censored observations is $n_c=n_1+n_2-n_e$. Taking into account the possible repetition of values among the observations, we consider the order statistics $(T_{(1)},\dots,T_{(k)})$ of the distinct observations where $k$ is the number of distinct observed times among all groups.
\\

\noindent Let define the set functions
\begin{align*}
m_i^e(A)=\sum_{j=1}^{n_i}\delta_j^{(i)}\mathbbm{1}_A(T_j^{(i)})\quad &;\quad 
m_i^c(A)=\sum_{j=1}^{n_i}(1-\delta_j^{(i)})\mathbbm{1}_A(T_j^{(i)})
\end{align*}
for $i\in \{1,\dots,d\}$, which denote the number of, respectively, exact and censored marginal observations in $A$, with respect to group $i$. We define $N_i^e(x)=m_i^e\left( (x,\infty)\right)$,  $\,N_i^c(x)=m_i^c\left( (x,\infty)\right)$, for $i\in \{1,\dots , d\}$ and $\,n_{i,j}^e=m_i^e(\{T_{(j)}\})$, $\,n_{i,j}^c=m_i^c(\{T_{(j)}\})$, $\,\bar{n}_{i,j}^e=\sum_{r=j}^k n_{i,r}^e$ $\,\bar{n}_{i,j}^c=\sum_{r=j}^k n_{i,r}^c$  for $(i,j)\in\{1,\dots,d\}\times\{1,\dots,k\}$; and the corresponding vectors $\bar{\pmb{n}}^e_j=(\bar{n}^e_{1,j},\dots,\bar{n}^e_{d,j})$,
$\bar{\pmb{n}}^c_j=(\bar{n}^c_{1,j},\dots,\bar{n}^c_{d,j})$, for $j\in \{1,\dots,k\}$ and $\pmb{N}^e(x)=(N_1^e(x),\dots, N_d^e(x))$, $\pmb{N}^c(x)=(N_1^c(x),\dots, N_d^c(x))$.
\\
The next theorem determines the calculation of the posterior distribution for a vector of CRM's given some censored data and it applies to general vectors of CRM's. In particular, the assumption that the respective L\`{e}vy  intensity is homogeneous has been dropped.
\begin{teo}
Let $\pmb{\mu}=(\mu_1,\dots , \mu_d)$ be a $d$-variate CRM such that its corresponding 
L\`{e}vy  intensity $\nu(\pmb{s},\mathrm{d}t)\mathrm{d}\pmb{s}$ is differentiable with respect to $t_0$ on $\re^+ \setminus \{0\}$ in the sense that for $\eta_t=\nu(\pmb{s},(0,t])$ the partial derivative
$\eta'_{t_0}(\pmb{s})=\partial\restr{\eta_t(\pmb{s})/\partial t}{t=t_0}$ exists. Moreover we assume that the entries of $\pmb{\mu}$ are not independent.  Then the posterior distribution of $\pmb{\mu}$ given data $\pmb{D}$ is the distribution of the random measure
\begin{equation*}
(\mu_1^\star,\dots ,\mu_d^\star)+\sum_{\{j\,:\,T_{(j)} \text{is an exact observation}\}}
(J_{1,j}\delta_{T_{(j)}},\dots , J_{d,j}\delta_{T_{(j)}} )
\end{equation*}
where
\begin{enumerate}[i)]
\item $\pmb{\mu}^\star=(\mu_1^\star,\dots , \mu_d^\star)$ is a $d$-variate CRM with L\'evy  intensity $\nu^\star$ such that
\begin{equation*}
\restr{ \nu^\star (\mathrm{d}\pmb{s}, \mathrm{d}x) }{ \mathrm{d}x \in (T_{(j-1)}, T_{(j)} ) }
= 
\e^{-\langle \bar{\pmb{n}}_{j}^c+\bar{\pmb{n}}_{j}^e\, , \, \pmb{s} \rangle }
\nu(\mathrm{d}\pmb{s}, \mathrm{d}x)
\end{equation*}
for $j\in \{1, \dots , k+1\}$.
\item The vectors of jumps $\{(J_{1,j},\dots ,J_{d,j})\}_{j\in J}$, with $J=\{j\,:\,T_{(j)} \text{ is an exact observation}\}$, are mutually independent and the vector of jumps corresponding to the exact observation $T_{(j)}$ has density
\begin{equation*}
f_j(\pmb{s})\propto  \prod_{i=1}^d \left\{ \e^{-(\bar{n}_{i,j}^c+\bar{n}_{i,j+1}^e)s_i}(1-\e^{-s_i})^{n_{i,j}^e}\right\}
 \eta'_{T_{(j)}}(\pmb{s}).
\end{equation*}
\item The random measure $\pmb{\mu}^\star$ is independent of $\{(J_{1,j},\dots ,J_{d,j})\}_{j\in J}$, with 
\\
$J=\{j\,:\,T_{(j)} \text{ is an exact observation}\}$.
\end{enumerate}
\end{teo}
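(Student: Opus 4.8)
The plan is to compute the Laplace functional of the posterior of $\pmb{\mu}$ given the censored data $\pmb{D}$, following the disintegration strategy used by \cite{epilijoi} in the two-dimensional case and by James for exchangeable CRM-based models. First I would write the joint law of $(\pmb{\mu}, \pmb{D})$: conditionally on $\pmb{\mu}$, each exact observation at time $T^{(i)}_j$ contributes a factor of the form $\mathrm{d}\mu_i(T^{(i)}_j)\,\e^{-\mu_i(0,T^{(i)}_j)}$ (the differential of $1-\e^{-\mu_i(0,t]}$), while each censored observation contributes $\e^{-\mu_i(0,c^{(i)}_j]}$. Collecting the exponential factors across all $nd$ observations and reorganising the sum over the grid of distinct times $T_{(1)}<\cdots<T_{(k)}$, the total exponential weight on the interval $(T_{(j-1)},T_{(j)})$ is exactly $\langle \bar{\pmb{n}}^c_j+\bar{\pmb{n}}^e_j,\pmb{s}\rangle$ integrated against $\mu$, which is where the set functions $\bar{\pmb{n}}^c_j,\bar{\pmb{n}}^e_j$ defined just before the theorem enter.

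Next I would invoke the Lévy--Khintchine/Palm calculus for CRM's: the key tool is that for a CRM with Lévy intensity $\nu$, an expression of the form $\esp{ \e^{-\int f\,\mathrm{d}\mu} \prod_{\ell} \mu(\mathrm{d}x_\ell) }$ disintegrates, via the multivariate Palm formula (Slivnyak--Mecke), into an exponential ``updated'' term coming from the continuous part of $\nu$ reweighted by $\e^{-f}$, times a product of fixed-location jump terms, one at each $x_\ell$, whose laws are governed by $\nu$ restricted to $\{x_\ell\}$ and tilted by the relevant integrand. Applying this with $f(\pmb{s},x)=\langle \bar{\pmb{n}}^c_j+\bar{\pmb{n}}^e_j,\pmb{s}\rangle$ on $(T_{(j-1)},T_{(j)})$ yields the claimed reweighted intensity $\nu^\star(\mathrm{d}\pmb{s},\mathrm{d}x)=\e^{-\langle\bar{\pmb{n}}^c_j+\bar{\pmb{n}}^e_j,\pmb{s}\rangle}\nu(\mathrm{d}\pmb{s},\mathrm{d}x)$ for $\pmb{\mu}^\star$, together with independent fixed jumps at the exact-observation times. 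The density of the jump at $T_{(j)}$ comes from the Palm kernel at that location: the differentiability hypothesis on $\eta_t$ converts the Lévy mass ``at the point $T_{(j)}$'' into the density $\eta'_{T_{(j)}}(\pmb{s})$, the factor $(1-\e^{-s_i})^{n^e_{i,j}}$ accounts for the $n^e_{i,j}$ exact observations in group $i$ tied at $T_{(j)}$ (each bringing one increment differential $\sim (1-\e^{-s_i})$ after integrating the Poisson-type expansion), and $\e^{-(\bar n^c_{i,j}+\bar n^e_{i,j+1})s_i}$ collects the survival weight of all observations strictly to the right of $T_{(j)}$ plus the censored ones at $T_{(j)}$; this matches part (ii). Independence of $\pmb{\mu}^\star$ from the jump vectors (part (iii)) and mutual independence of the jump vectors across distinct exact times follow from the independence of the Poisson point process restricted to disjoint regions (the diffuse part versus the finitely many atoms, and atoms at distinct locations), using that $\alpha$ is non-atomic so almost surely no two of the finitely many structural atoms collide.

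Concretely, the steps in order are: (1) write the finite-dimensional likelihood of $\pmb{D}$ given $\pmb{\mu}$ as a product of increment differentials (exact data) and survival exponentials (censored data), passing to a limit of fine partitions so that ``$\mathrm{d}\mu_i(T_{(j)})$'' is made rigorous; (2) take expectation over $\pmb{\mu}$ and regroup the exponential weights by the intervals $(T_{(j-1)},T_{(j)})$, producing the coefficients $\langle \bar{\pmb{n}}^c_j+\bar{\pmb{n}}^e_j,\pmb{s}\rangle$; (3) apply the multivariate Palm/Mecke formula to the CRM to disintegrate the mixed (exponential $\times$ product-of-atoms) functional, extracting the exponentially tilted intensity $\nu^\star$ and the atom kernels; (4) identify, from the atom kernel at $T_{(j)}$ and the differentiability assumption, the jump density $f_j$ in (ii), taking care of the tie multiplicities $n^e_{i,j}$; (5) read off independence in (iii) from the Poisson structure on disjoint regions; (6) finally verify that the resulting object indeed has the Laplace functional computed in step (3), i.e.\ that the proposed posterior and the computed one agree on a determining class, which proves the characterization. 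The main obstacle I anticipate is step (3)--(4): making the multivariate Palm computation fully rigorous when several exact observations (across groups) are tied at the same time $T_{(j)}$, so that the single structural atom of $\pmb{\mu}$ at that location carries a $d$-dimensional jump whose marginals are tilted by different powers $(1-\e^{-s_i})^{n^e_{i,j}}$; keeping track of which exponential weights attach to $\bar n^e_{i,j}$ versus $\bar n^e_{i,j+1}$ (i.e.\ whether the atom at $T_{(j)}$ is counted as ``still alive'' for the exact observations tied there) is the delicate bookkeeping, and the hypothesis that the entries of $\pmb{\mu}$ are not independent is what guarantees these cross-group ties produce genuinely $d$-dimensional jump vectors rather than degenerate ones.
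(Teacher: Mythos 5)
Your overall architecture (an exponentially tilted intensity on each interval $(T_{(j-1)},T_{(j)})$, plus independent fixed-location jump vectors at the exact times, with the weights $\bar n^c_{i,j}+\bar n^e_{i,j+1}$ at an atom versus $\bar n^c_{i,j}+\bar n^e_{i,j}$ on the open intervals correctly apportioned) matches the statement, but your route is not the paper's and, as sketched, it has a gap at its central step. The paper never invokes Palm/Mecke calculus: it conditions on the approximating event $\Gamma_{\pmb{D},\epsilon}$ in which each exact observation is only known to fall in $(T_{(j)}-\epsilon,T_{(j)}]$, factors the conditional expectation by independence of increments into a product $I_{1,\epsilon}\, I_{2,\epsilon}$, proves a technical asymptotic lemma (binomial expansion of $\prod_i(1-e^{-\mu_i(A_\epsilon)})^{q_i}$ combined with a Taylor expansion of the increment of the Laplace exponent) showing that each exact-observation block equals $\epsilon\int_{(\re^+)^d} e^{-\langle \pmb{r},\pmb{s}\rangle}\prod_i(1-e^{-s_i})^{n^e_{i,j}}\,\eta'_{T_{(j)}}(\pmb{s})\,\mathrm{d}\pmb{s}+o(\epsilon)$, and then computes the limit of the ratio $\esp{e^{-\sum_i\lambda_i\mu_i(0,t]}\mathbbm{1}_{\Gamma_{\pmb{D},\epsilon}}}/\prob{\pmb{D}\in\Gamma_{\pmb{D},\epsilon}}$ to obtain the posterior joint Laplace functional explicitly, from which i)--iii) are read off.

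The gap in your plan sits exactly at your steps (1) and (3), which is where all of this work lies. The Mecke/Palm identity you state applies to moment-type functionals $\esp{e^{-\int f\,\mathrm{d}\mu}\prod_\ell \mu(\mathrm{d}x_\ell)}$, but the NTR likelihood is not of that form: an exact observation at $t$ contributes $e^{-\mu_i(0,t)}\bigl(1-e^{-\Delta\mu_i(t)}\bigr)$, and under the prior $\pmb{\mu}$ has no fixed atoms, so $\Delta\mu_i(T_{(j)})=0$ almost surely and the conditioning event has probability zero. A single application of the Palm formula does not disintegrate this object; you must either construct the joint law of the data and the underlying Poisson random measure with respect to a dominating measure (the James-style partition-calculus route, in which the binomial expansion of the $(1-e^{-s_i})^{n^e_{i,j}}$ factors is carried out inside the Poisson functional), or carry out the $\epsilon$-neighborhood limit — which is precisely the paper's technical lemma and the subsequent ratio-of-limits computation. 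Your sketch defers both to ``passing to a limit of fine partitions'' and ``integrating the Poisson-type expansion'' without supplying either; note that the differentiability hypothesis on $\eta_t$ is used only there, and the $\epsilon$ versus $o(\epsilon)$ bookkeeping is what makes the densities $f_j$ and their normalizations come out as stated. A smaller correction: the assumption that the entries of $\pmb{\mu}$ are not independent is not about ties across groups producing $d$-dimensional jumps; it ensures the joint L\'evy intensity is not concentrated on the coordinate axes, so that $\eta'_{T_{(j)}}(\pmb{s})$ is a genuine density on $(\re^+)^d$ and the expressions in ii) are well defined.
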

\noindent We refer to the Appendix \ref{proof4} for the proof. The previous result showcases that the posterior distribution arising from (\ref{model1}) can be modeled in the same framework via a vector of CRM's by updating the prior vector of CRM's $\pmb{\mu}$ to $\pmb{\mu}^\star$ as above.\\

This result is enough to provide a scheme for posterior inference. In particular, in the setting of (\ref{model1}) and Theorem 1, we want to estimate the corresponding survival function $\probc{Y^{(1)}>t_1,\dots , Y^{(d)}>t_d}{(\mu_1,\dots, \mu_d)}$ when multiple samples information is available. 

A natural approach in Bayesian nonparametrics is to marginalize over the infinite dimensional random element which characterizes the probability model. In our case, given censored data $\pmb{D}$, we calculate the mean of the survival function given the data by marginalizing over the vector of CRM's $\pmb{\mu}$. As a result of Theorem 1, we can calculate such quantity. The next results allow us to implement the necessary inferential scheme for performing the estimation of the survival function as a posterior mean. 
We denote $\pmb{e}_i$ for the canonical basis of $\re^d$, and 
$S_L(t)=S(t\sum_{l\in L}\pmb{e}_l)$ for $t>0$, $\emptyset \neq L \subset \{1,\dots, d\}$.
In view of the independent increments of the CRM's, calculation of the posterior mean of $S_L$ is all that is needed for the evaluation of the posterior mean of $S$. The next corollary shows how to evaluate the posterior mean of $S_L$.
\begin{cor}
Let $\pmb{\mu}$ be a vector of CRM's with corresponding 
L\`{e}vy  intensity such that $\eta_t(\pmb{s})=\gamma(t)\nu(\pmb{s})$ with $\gamma$ a differentiable function satisfying $\gamma '(t)\neq 0$ for $t>0$. Moreover we assume that the entries of $\pmb{\mu}$ are not independent.
Let  $\emptyset \neq L \subset \{1,\dots, d\}$  and set 
$$J_t=\{j\,:\,T_{(j)}\leq t\}$$
where $T_{(k+1)}=\infty$. Then,
\begin{align*}
\hat{S}_L(t)&=\espc{\espc{S_L(t)}{\pmb{\mu}}}{\pmb{D}}=
\e^{- \sum_{j=1}^{k+1}
\left[
\gamma(t \wedge T_{(j)})-\gamma(T_{j-1})
\right] \indi{[T_{(j-1)},\infty )}(t)
\psi_j^\star \left(\sum_{l\in L}\pmb{e}_l \right)
}
\\
 &
\times 
\prod_{j\in J_t}
\gamma '(T_{(j)}) 
\left[
\frac{
 \int_{(\re^+)^d} \prod_{i=1}^d \left\{ \e^{-(\indi{i \in L}+\bar{n}_{i,j}^c+\bar{n}_{i,j+1}^e)s_i}(1-\e^{-s_i})^{n_{i,j}^e}\right\}
 \nu(\pmb{s})\mathrm{d}\pmb{s}
}
{
 \int_{(\re^+)^d} \prod_{i=1}^d \left\{ \e^{-[\bar{n}_{i,j}^c+\bar{n}_{i,j+1}^e]s_i}(1-\e^{-s_i})^{n_{i,j}^e}\right\}\nu(\pmb{s}) \mathrm{d}\pmb{s}
}
\right]
\end{align*} 
where $T_{(0)}=0$ and for $\pmb{\lambda}\in (\re^+)^d$
\begin{align*}
\psi_j^\star(\pmb{\lambda})&=\int_{(\re^+)^d}\left(
1-\e^{-\langle \pmb{\lambda}, \pmb{s} \rangle}
\right)
 \e^{-\langle \bar{\pmb{n}}_j^c +  \bar{\pmb{n}}_j^e,\pmb{s} \rangle}
\nu(\pmb{s})\mathrm{d}\pmb{s}
\\
&=\psi(\pmb{\lambda} + \bar{\pmb{n}}_j^c +  \bar{\pmb{n}}_j^e)-\psi( \bar{\pmb{n}}_j^c +  \bar{\pmb{n}}_j^e).
\end{align*}
\end{cor}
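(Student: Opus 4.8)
The plan is to obtain $\hat S_L(t)$ by substituting the posterior representation of Theorem~1 into the Laplace-transform description of the survival function and then factoring the resulting expectation through the independence statements of that theorem.

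First, since $S_L(t)=\e^{-\sum_{l\in L}\mu_l(0,t]}$ is a deterministic functional of $\pmb{\mu}$, the inner conditioning is vacuous, $\espc{S_L(t)}{\pmb{\mu}}=S_L(t)$, so writing $\pmb{1}_L=\sum_{l\in L}\pmb{e}_l$ and $\pmb{\mu}(0,t]=(\mu_1(0,t],\dots,\mu_d(0,t])$ we must compute $\hat S_L(t)=\espc{\e^{-\langle\pmb{1}_L,\,\pmb{\mu}(0,t]\rangle}}{\pmb{D}}$. By Theorem~1, conditionally on $\pmb{D}$ the vector $\pmb{\mu}$ equals in distribution $\pmb{\mu}^\star+\sum_{j\in J}(J_{1,j},\dots,J_{d,j})\delta_{T_{(j)}}$, with $J=\{j:T_{(j)}\text{ exact}\}$, where $\pmb{\mu}^\star$ is the CRM with piecewise L\'evy intensity $\nu^\star$ of part (i) and is independent of the mutually independent jump vectors of part (ii). Evaluating on $(0,t]$ and using these independences,
\[
\hat S_L(t)=\esp{\e^{-\langle\pmb{1}_L,\,\pmb{\mu}^\star(0,t]\rangle}}\;\prod_{j\in J_t}\esp{\e^{-\langle\pmb{1}_L,\,(J_{1,j},\dots,J_{d,j})\rangle}},\qquad J_t=\{j\in J:\,T_{(j)}\le t\}.
\]

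For the first factor I would apply the multivariate L\'evy--Khintchine formula (cf.\ \eqref{lap2}) to $\pmb{\mu}^\star$ and split $(0,t]$ along the distinct observation times into the intervals $(T_{(j-1)},T_{(j)}]\cap(0,t]$, $j=1,\dots,k+1$, with $T_{(0)}=0$, $T_{(k+1)}=\infty$. Under the hypothesis $\eta_x(\pmb{s})=\gamma(x)\nu(\pmb{s})$ the restriction of $\nu^\star$ to the $j$-th interval is $\e^{-\langle\bar{\pmb{n}}_j^c+\bar{\pmb{n}}_j^e,\,\pmb{s}\rangle}\nu(\pmb{s})\,\mathrm{d}\pmb{s}\,\gamma'(x)\,\mathrm{d}x$, whose $x$-marginal over $(T_{(j-1)},T_{(j)}]\cap(0,t]$ equals $[\gamma(t\wedge T_{(j)})-\gamma(T_{(j-1)})]\indi{[T_{(j-1)},\infty)}(t)$ and whose $\pmb{s}$-integral against $1-\e^{-\langle\pmb{1}_L,\pmb{s}\rangle}$ equals $\psi_j^\star(\pmb{1}_L)$; summing over $j$ produces the exponential prefactor in the statement. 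The alternative form $\psi_j^\star(\pmb{\lambda})=\psi(\pmb{\lambda}+\bar{\pmb{n}}_j^c+\bar{\pmb{n}}_j^e)-\psi(\bar{\pmb{n}}_j^c+\bar{\pmb{n}}_j^e)$ then follows from the elementary identity $\e^{-\langle\pmb{a},\pmb{s}\rangle}(1-\e^{-\langle\pmb{\lambda},\pmb{s}\rangle})=(1-\e^{-\langle\pmb{\lambda}+\pmb{a},\pmb{s}\rangle})-(1-\e^{-\langle\pmb{a},\pmb{s}\rangle})$ together with the definition $\psi(\pmb{\lambda})=\int_{(\re^+)^d}(1-\e^{-\langle\pmb{\lambda},\pmb{s}\rangle})\nu(\pmb{s})\,\mathrm{d}\pmb{s}$, both integrals being finite since $\psi$ is finite on $(\re^+)^d$.

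For the second factor, for each $j\in J_t$ I would integrate $\e^{-\langle\pmb{1}_L,\pmb{s}\rangle}=\prod_{i=1}^d\e^{-\indi{i\in L}s_i}$ against the density $f_j$ of part (ii). Since $f_j$ is the normalised version of $\prod_{i=1}^d\{\e^{-(\bar{n}_{i,j}^c+\bar{n}_{i,j+1}^e)s_i}(1-\e^{-s_i})^{n_{i,j}^e}\}\,\eta'_{T_{(j)}}(\pmb{s})$ and $\eta'_{T_{(j)}}(\pmb{s})=\gamma'(T_{(j)})\nu(\pmb{s})$ here (the assumption $\gamma'\neq 0$ guaranteeing $f_j$ is well defined), carrying out the normalisation expresses $\esp{\e^{-\langle\pmb{1}_L,(J_{1,j},\dots,J_{d,j})\rangle}}$ through the bracketed ratio of $\pmb{s}$-integrals displayed in the statement, the factor $\gamma'(T_{(j)})$ being tracked via that normalising constant. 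Multiplying the first factor by the product over $j\in J_t$ of these ratios gives the claimed formula. I expect essentially all of the analytic content to be provided by Theorem~1 together with the L\'evy--Khintchine representation; the one step requiring genuine care is the combinatorial bookkeeping of the count vectors $\bar{\pmb{n}}_j^e,\bar{\pmb{n}}_j^c$ — in particular the shift between $\bar{n}_{i,j}^e$ and $\bar{n}_{i,j+1}^e$, which records whether the atom placed at $T_{(j)}$ is itself among the survivors at that epoch — and the correct truncation at $t$ of the interval containing it.
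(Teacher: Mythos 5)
Your approach is essentially the paper's: Corollary 1 is obtained by specialising the posterior characterisation of Theorem 1 (equivalently, the posterior Laplace transform of $(\mu_1(0,t],\dots,\mu_d(0,t])$ derived at the end of its proof) to $\pmb{\lambda}=\sum_{l\in L}\pmb{e}_l$ and to the homogeneous case $\eta_t(\pmb{s})=\gamma(t)\nu(\pmb{s})$, splitting $(0,t]$ at the distinct observation times; your derivation of the exponential prefactor and of the identity $\psi_j^\star(\pmb{\lambda})=\psi(\pmb{\lambda}+\bar{\pmb{n}}_j^c+\bar{\pmb{n}}_j^e)-\psi(\bar{\pmb{n}}_j^c+\bar{\pmb{n}}_j^e)$ is exactly that computation. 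The one step that does not withstand scrutiny is your account of the factor $\gamma'(T_{(j)})$: since $f_j(\pmb{s})\propto\prod_{i=1}^d\left\{\e^{-(\bar{n}_{i,j}^c+\bar{n}_{i,j+1}^e)s_i}(1-\e^{-s_i})^{n_{i,j}^e}\right\}\gamma'(T_{(j)})\,\nu(\pmb{s})$, the constant $\gamma'(T_{(j)})$ appears in both the numerator and the normalising constant of the jump expectation and therefore cancels; it cannot be ``tracked via that normalising constant'' into a stand-alone multiplicative factor. Carried out correctly, your computation yields the bracketed ratio with no $\gamma'(T_{(j)})$ in front, which is also what the final display in the proof of Theorem 1 gives upon substituting $\eta'_{T_{(j)}}(\pmb{s})=\gamma'(T_{(j)})\nu(\pmb{s})$; the factor $\prod_{j\in J_t}\gamma'(T_{(j)})$ in the displayed statement appears to be a slip carried over from the unnormalised likelihood of Corollary 2 (with it, $\hat S_L(t)$ need not even lie in $[0,1]$). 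Apart from this point, which reflects a defect of the printed formula rather than of your argument, the proof is correct and follows the paper's route.
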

We see that we can estimate $S(\pmb{t})$ for arbitrary $\pmb{t}\in (\re^+)^d$ in terms of the estimates defined in the previous corollary. Indeed, let $\pmb{t}=(t_1,\dots , t_d)$ and $\pi$ be a permutation of $\{1,\dots ,d\}$ such that $t_{\pi(1)}\leq t_{\pi(2)}\leq \dots \leq t_{\pi(d)}$. We define, for $i\in \{1,\dots, d-1\}$,  the following sets 
\begin{align*}
& L_i = \{j\, : \,  \pi^{(-1)}(j)\geq i \}.
\end{align*}
From the independence of increments of CRM's, it follows that the 
posterior mean of the survival function given censored data $\pmb{D}$ is
\begin{align}\label{estim}
\hat{S}(\pmb{t})=\espc{\espc{S(\pmb{t})}{\pmb{\mu}}}{\pmb{D}}=
\hat{S}_{L_1}(t_{\pi (1)})\prod_{i=1}^{d-1}
\frac{\hat{S}_{L_{i}}(t_{\pi(i+1)})}
{\hat{S}_{L_{i}}(t_{\pi(i)})}\quad \pmb{t}\in (\re^+)^d.
\end{align}
Usually, we deal with  L\'{e}vy intensities which exhibit some dependences in a vector of hyper-parameters $\pmb{c}$. On the proof of Theorem 1, it is outlined how, given censored data $\pmb{D}$ as before, we could derive the likelihood of the hyper-parameters in the L\'{e}vy intensity. This likelihood is necessary for implementing the inferential procedure and it is displayed in the next corollary. 
\begin{cor}
Let $\pmb{\mu}$ be a vector of CRM's with corresponding 
L\`{e}vy  intensity such that $\eta_t(\pmb{s})=\gamma(t)\rho_{d, \pmb{c}}(\pmb{s})$ with $\gamma$ a differentiable function satisfying $\gamma '(t)\neq 0$ for $t>0$, and $\pmb{c}$ a vector of hyper-parameters. Given censored data $\pmb{D}$ we get the likelihood on $\pmb{c}$.
\begin{align*}
l(\pmb{c};\mathcal{D})=&\e^{-\sum_{j=1}^k
\left[ \gamma(T_{(j)})-\gamma(T_{(j-1)}) \right]\psi_{\pmb{d,c}}( \bar{\pmb{n}}_j^c + \bar{\pmb{n}}_j^e )}
\\
\quad \times &
\prod_{j\in J}\gamma ' (T_{(j)})\int_{(\re^+)^d}\prod_{i=1}^d
\left\{
\e^{-(\bar{n}_{i,j}^c + \bar{n}_{i,j}^e)s_i}(1-\e^{-s_i})^{n_{i,j}^e}
\rho_{d,\pmb{c}}(\pmb{s})\mathrm{d}\pmb{s}
\right\},
\end{align*} 
where $\psi_{d,\pmb{c}}$ is the Laplace exponent associated to $\rho_{d,\pmb{c}}$.
\end{cor}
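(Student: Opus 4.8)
The plan is to identify $l(\pmb c;\mathcal D)$ with the marginal law of the censored sample $\pmb D$ under model~\eqref{model1} when $\eta_t(\pmb s)=\gamma(t)\rho_{d,\pmb c}(\pmb s)$, regarded as a function of the hyper-parameter $\pmb c$, and to extract this quantity from the disintegration already carried out in the proof of Theorem~1 — the posterior stated there being precisely that marginal divided by $l(\pmb c;\mathcal D)$. Under the usual non-informative censoring assumption, the conditional likelihood given $\pmb\mu$ factorises into survival factors $\e^{-\mu_i(0,T_j^{(i)}]}$ for the censored observations and, for the exact ones, infinitesimal density factors read off from $\probc{Y_j^{(i)}\in(T_j^{(i)}-\eps,T_j^{(i)}]}{\pmb\mu}$ as $\eps\downarrow 0$; the first task is to set up this factorisation and then integrate over $\pmb\mu$.

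Next I would isolate the continuous contribution. Collecting the survival factors along the grid $0=T_{(0)}<T_{(1)}<\cdots<T_{(k)}$, the coefficient of $\mu_i$ on $(T_{(j-1)},T_{(j)}]$ is the number of group-$i$ observations still at risk at $T_{(j)}$, namely $\bar n_{i,j}^c+\bar n_{i,j}^e$, so that, apart from the fixed jumps created at the exact times (treated below), the product of these factors is $\exp\{-\sum_{j=1}^{k}\langle\bar{\pmb n}_j^c+\bar{\pmb n}_j^e,\ \pmb\mu(T_{(j-1)},T_{(j)}]\rangle\}$. Since increments of $\pmb\mu$ over disjoint intervals are mutually independent and, by~\eqref{laplacenonhomog}, $\psi_t(\pmb\lambda)=\gamma(t)\psi_{d,\pmb c}(\pmb\lambda)$ with $\psi_{d,\pmb c}$ the Laplace exponent of $\rho_{d,\pmb c}$ from~\eqref{lap2}, taking the expectation gives $\exp\{-\sum_{j=1}^{k}[\gamma(T_{(j)})-\gamma(T_{(j-1)})]\psi_{d,\pmb c}(\bar{\pmb n}_j^c+\bar{\pmb n}_j^e)\}$, the first factor of the claimed formula.

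Then I would treat the jump contribution. For each exact time $T_{(j)}$, $j\in J$, letting $\eps\downarrow 0$ in the probability of landing in the box $\prod_{i=1}^d(T_{(j)}-\eps,T_{(j)}]$ and normalising by the appropriate power of $\gamma(T_{(j)})-\gamma(T_{(j)}-\eps)$, the differentiability hypothesis — here in the form $\eta'_{T_{(j)}}(\pmb s)=\gamma'(T_{(j)})\rho_{d,\pmb c}(\pmb s)$ — produces a fixed jump location at $T_{(j)}$ with intensity $\gamma'(T_{(j)})$ whose jump-height vector has density $\rho_{d,\pmb c}(\pmb s)$, each of the $n_{i,j}^e$ coincident exact observations in group $i$ contributing a factor $1-\e^{-s_i}$ and the remaining at-risk observations contributing $\e^{-(\bar n_{i,j}^c+\bar n_{i,j}^e)s_i}$. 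This yields the factor $\gamma'(T_{(j)})\int_{(\re^+)^d}\prod_{i=1}^d\{\e^{-(\bar n_{i,j}^c+\bar n_{i,j}^e)s_i}(1-\e^{-s_i})^{n_{i,j}^e}\}\rho_{d,\pmb c}(\pmb s)\,\mathrm d\pmb s$ attached to $T_{(j)}$; distinct exact times correspond to independent jump locations — $\alpha$ being non-atomic, ties at a common $T_{(j)}$ collapse onto one location — so these factors multiply over $j\in J$, and the product of the continuous and the jump contributions is the asserted expression. (Consistency with Theorem~1 is a bookkeeping remark: the jump integrand above equals $\prod_i\e^{-n_{i,j}^e s_i}$ times the unnormalised jump density $f_j$ of Theorem~1, the discrepancy being exactly the continuous weight the likelihood attributes to the $n_{i,j}^e$ exact observations sitting at $T_{(j)}$.)

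The hard part will be the $\eps\downarrow 0$ passage: justifying the interchange of limit and expectation and showing that the rescaled box probabilities converge to the displayed integrals. This is where the differentiability of $t\mapsto\eta_t(\pmb s)$, the regularity of $\gamma$ (notably $\gamma'(t)\neq 0$ and $\gamma(0)=0$), and the non-degeneracy of the dependence (the entries of $\pmb\mu$ not independent, so that each exact time genuinely carries a common jump) all enter, alongside the routine-but-delicate bookkeeping of the risk sets $\bar n_{i,j}^c,\bar n_{i,j}^e$ and the absorption of coincident observations into a single jump. Since every one of these steps is already performed in the proof of Theorem~1, the cleanest execution is not to redo them: one simply extracts the normalising constant $l(\pmb c;\mathcal D)$ from that proof and substitutes $\eta_t(\pmb s)=\gamma(t)\rho_{d,\pmb c}(\pmb s)$, $\eta'_{T_{(j)}}(\pmb s)=\gamma'(T_{(j)})\rho_{d,\pmb c}(\pmb s)$.
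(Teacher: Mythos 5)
Your overall strategy---reading the likelihood off as the (suitably $\epsilon$-rescaled) normalising constant $\lim_{\epsilon\to 0}\prob{\pmb{D}\in\Gamma_{\pmb{D},\epsilon}}$ in the proof of Theorem 1, after substituting $\eta_t(\pmb{s})=\gamma(t)\rho_{d,\pmb{c}}(\pmb{s})$ and $\eta'_{T_{(j)}}(\pmb{s})=\gamma'(T_{(j)})\rho_{d,\pmb{c}}(\pmb{s})$---is exactly what the paper intends, and your treatment of the continuous factor via independence of increments is fine. The gap is in the jump factor. On the sliver $(T_{(j)}-\epsilon,T_{(j)}]$ the $n_{i,j}^e$ group-$i$ observations that fail at $T_{(j)}$ contribute only $(1-\e^{-s_i})^{n_{i,j}^e}$; the exponential weight there comes from the observations that survive past $T_{(j)}$, of which there are $\bar n_{i,j}^c+\bar n_{i,j+1}^e$, not $\bar n_{i,j}^c+\bar n_{i,j}^e$ as you write (your phrase ``the remaining at-risk observations'' is the right idea, but the count you attach to it includes the failures at $T_{(j)}$ themselves, since $\bar n_{i,j}^e=n_{i,j}^e+\bar n_{i,j+1}^e$). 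The exponent $\bar n_{i,j}^c+\bar n_{i,j+1}^e$ is precisely what appears in the denominator computed in the proof of Theorem 1, in Theorem 1(ii), and in Corollary 1.

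Your parenthetical ``bookkeeping remark'' does not repair this: the survival of the failures at $T_{(j)}$ over $(T_{(j-1)},T_{(j)})$ is already fully accounted for in the first exponential factor, because $\bar{\pmb{n}}_j^e$ (which includes $n_{i,j}^e$) enters $\psi_{d,\pmb{c}}(\bar{\pmb{n}}_j^c+\bar{\pmb{n}}_j^e)$ there and the sliver has vanishing $\gamma$-measure in the limit; inserting an additional $\e^{-n_{i,j}^e s_i}$ inside the jump integral therefore double-counts that weight, and it genuinely changes the value of the integral, so it is not an equivalent rewriting. What your derivation, carried out correctly, actually produces is the formula with $\bar n_{i,j+1}^e$; the statement as printed, with $\bar n_{i,j}^e$, is inconsistent with the Theorem 1 machinery from which it is extracted, and the honest conclusion is to flag that index as an apparent typo (and prove the consistent version) rather than to supply an ad hoc justification that conflicts with the at-risk accounting used everywhere else in the paper.
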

The next lemma provides a useful identity for the computation of the integrals in Corollary 1 and Corollary 2.
\begin{lem}
For $\pmb{q}=(q_1,\dots , q_d)\in (\re^+)^d$ and $\pmb{n}=(n_1,\dots, n_d)\in \mathbb{N}^d$
\begin{align*}
&\int_{(\re^+)^d} \e^{-\langle \pmb{q}, \pmb{x} \rangle }
\prod_{i=1}^d\left(
1-\e^{-s_i}
\right)^{n_i}\nu(\pmb{s})\mathrm{d}\pmb{s}
= 
\sum_{i=1}^d \sum_{k=1}^{n_i}\binom{n_i}{k}(-1)^{k+1}
[\psi(k\pmb{e}_i + \pmb{q})-\psi(\pmb{q})]
\\
&+\sum_{
\begin{array}{c}
i_1\neq i_2 \\ 
n_{i_1},n_{i_2} \notin \{0\} 
\end{array} 
}
\sum_{k_1=1}^{n_1}\sum_{k_2=1}^{n_2}\binom{n_1}{k_1} \binom{n_2}{k_2}
(-1)^{k_1+k_2+1}[\psi(k_1\pmb{e}_{i_1}+k_2\pmb{e}_{i_2}+\pmb{q}) - \psi(\pmb{q})]
\\
&+\dots 
\\
& +
\indi{\{n_1\neq 0, \dots, n_d\neq 0\}}
\sum_{k_1=1}^{n_1}\dots \sum_{k_d=1}^{n_d}(-1)^{k_1+\dots +k_d+1}
[\psi(k_1\pmb{e}_1+\dots +k_d\pmb{e}_d+\pmb{q})-\psi(\pmb{q})].
\end{align*}
\end{lem}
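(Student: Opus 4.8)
The plan is to reduce the integral to a finite linear combination of values of the Laplace exponent $\psi$, using the binomial theorem together with a cancellation of coefficients. Write $I$ for the left-hand side and assume $\pmb n\neq\pmb 0$ (the only case used in Corollaries 1--2, since at an exact observation at least one group contributes; the statement is to be read in this sense). For a multi-index $\pmb k=(k_1,\dots,k_d)$ with $0\le k_i\le n_i$ put $c_{\pmb k}=\prod_{i=1}^d\binom{n_i}{k_i}(-1)^{k_i}$. Expanding each factor $(1-\e^{-s_i})^{n_i}$ and multiplying out gives the pointwise identity, for every $\pmb s\in(\re^+)^d$,
\[
\e^{-\langle\pmb q,\pmb s\rangle}\prod_{i=1}^d\bigl(1-\e^{-s_i}\bigr)^{n_i}=\sum_{0\le\pmb k\le\pmb n}c_{\pmb k}\,\e^{-\langle\pmb q+\pmb k,\pmb s\rangle}.
\]
The key observation is that $\sum_{0\le\pmb k\le\pmb n}c_{\pmb k}=\prod_{i=1}^d(1-1)^{n_i}=0$ since $\pmb n\neq\pmb 0$, so I may subtract $c_{\pmb k}\,\e^{-\langle\pmb q,\pmb s\rangle}$ from each summand without changing the total, obtaining
\[
\e^{-\langle\pmb q,\pmb s\rangle}\prod_{i=1}^d\bigl(1-\e^{-s_i}\bigr)^{n_i}=\sum_{0\le\pmb k\le\pmb n}c_{\pmb k}\Bigl(\e^{-\langle\pmb q+\pmb k,\pmb s\rangle}-\e^{-\langle\pmb q,\pmb s\rangle}\Bigr).
\]

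Next I would integrate against $\nu(\pmb s)\,\mathrm d\pmb s$ term by term. This is legitimate because the sum is finite and each integrand obeys $\bigl|\e^{-\langle\pmb q+\pmb k,\pmb s\rangle}-\e^{-\langle\pmb q,\pmb s\rangle}\bigr|\le 1-\e^{-\langle\pmb k,\pmb s\rangle}\le\min\{1,\langle\pmb k,\pmb s\rangle\}$, which is $\nu$-integrable by the standing finiteness condition on the L\'evy intensity (the same condition making $\psi(\pmb\lambda)<\infty$ for $\pmb\lambda\in(\re^+)^d$). Since $\psi(\pmb\lambda)=\int_{(\re^+)^d}(1-\e^{-\langle\pmb\lambda,\pmb s\rangle})\nu(\pmb s)\,\mathrm d\pmb s$, the $\pmb k$-th integral equals $\psi(\pmb q)-\psi(\pmb q+\pmb k)$, hence
\[
I=\sum_{0\le\pmb k\le\pmb n}c_{\pmb k}\bigl(\psi(\pmb q)-\psi(\pmb q+\pmb k)\bigr)=-\sum_{0\le\pmb k\le\pmb n}c_{\pmb k}\bigl(\psi(\pmb q+\pmb k)-\psi(\pmb q)\bigr).
\]

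Finally I would reindex this sum to match the stated form. The term $\pmb k=\pmb 0$ contributes zero, so the sum runs over $\pmb k\neq\pmb 0$; for each such $\pmb k$ set $T=\{\,i:k_i\neq0\,\}$, a nonempty subset of $\{\,i:n_i\neq0\,\}$. The frozen coordinates $i\notin T$ contribute $\binom{n_i}{0}(-1)^0=1$, so $c_{\pmb k}=\prod_{i\in T}\binom{n_i}{k_i}(-1)^{k_i}$, $-c_{\pmb k}=(-1)^{1+\sum_{i\in T}k_i}\prod_{i\in T}\binom{n_i}{k_i}$, and $\pmb q+\pmb k=\pmb q+\sum_{i\in T}k_i\pmb e_i$. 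Collecting the terms according to $\card{T}=1,2,\dots,d$ reproduces the displayed expansion line by line; in particular $\card{T}=d$ forces $n_i\neq0$ for all $i$, which is exactly the factor $\indi{\{n_1\neq0,\dots,n_d\neq0\}}$ in the last term. There is no deep obstacle: the only point requiring care is the term-by-term integration, and it works precisely because the zero-sum identity $\sum_{\pmb k}c_{\pmb k}=0$ turns each otherwise divergent integral $\int_{(\re^+)^d}\e^{-\langle\pmb\lambda,\pmb s\rangle}\nu(\pmb s)\,\mathrm d\pmb s$ into a convergent increment of $\psi$.
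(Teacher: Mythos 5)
Your proof is correct and follows essentially the route the paper intends: the paper omits the proof, describing it as an application of the binomial theorem in the spirit of the identities used for Lemma 2 in the proof of Theorem 1, and your expansion of $\prod_i(1-\e^{-s_i})^{n_i}$, regrouping into increments $\psi(\pmb q+\pmb k)-\psi(\pmb q)$ via the zero-sum of the coefficients $c_{\pmb k}$, and reindexing by the support $T$ is exactly that argument, with the convergence issue (individual integrals $\int \e^{-\langle\pmb\lambda,\pmb s\rangle}\nu(\pmb s)\,\mathrm{d}\pmb s$ may diverge) handled cleanly. Your reading that $\pmb n\neq\pmb 0$ is implicitly required, and that the paper's displayed right-hand side has minor index typos (e.g.\ $n_{i_1},n_{i_2}$ rather than $n_1,n_2$, and the missing binomial coefficients in the last term), is also accurate.
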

We omit the proof as it is just an application of the binomial theorem in the same line as the proof of Lemma 3 in the appendix.
\begin{remark} The previous results highlights that the implementation of the inferential procedure depends on whether we can perform evaluations of the Laplace exponent or not.
\end{remark}
\section{Applications}
In this section we perform the fitting of a multivariate survival function given censored to the right data in the framework of (\ref{model1}). As mentioned in the previous remark, the evaluation of the Laplace exponent of $\pmb{\mu}$ in (\ref{model1}) is necessary to evaluate the posterior mean in Corollary 1 and the likelihood in Corollary 2; with this in mind, we choose the random measure $\pmb{\mu}$ given by the L\'evy intensity showcased in (\ref{intens}), so that the corresponding Laplace exponent is readily given by (\ref{lapexp_workex}).  For illustration purposes, we use 4-dimensional data arising from a distributional copula with fixed marginal distributions, see \cite{nelsen} for an overview of distributional copulas. More precisely, we generate simulated data $\pmb{Y}=(Y_1,...,Y_4)$ with probability distribution $F_{\theta,\lambda}$ given by a distributional Clayton copula with parameter 
$\theta$ and exponential marginals with parameter $\lambda$. Then, we perform right-censoring by considering censoring time variables $\pmb{c}$ consisting of independent exponential random variables with parameter $\lambda_c$, and define
\begin{align*}
\pmb{\delta}&=(\indi{Y_1<c_1}, \dots, \indi{Y_4<c_4}),
\\
\pmb{T}&=(\min\{ Y_1, c_1 \}, \dots ,
\min\{Y_4, c_4  \} ). \numberthis \label{survdat}
\end{align*}
\\
For fitting the data, we use the 4-dimensional L\'evy intensity given by (\ref{intens}) and  assign priors for the hyper-parameters in (\ref{intens}), $\sigma$ and $A$. We choose a log-normal prior for the parameter A and a Beta prior for the parameter $\sigma$. We use the Metropolis within Gibbs algorithm to draw samples from the posterior distributions of $A$ and $\sigma$ by making use of the likelihood showed in Corollary 2. We present a Monte Carlo approximation of the estimator (\ref{estim}), where we have averaged over the posterior draws of $A$ and $\sigma$. A more in depth description of the simulation algorithm is given in Appendix \ref{algorithmappendix}. In Figures 3 and 4 we show the fit for 150 possibly right censored observations as in (\ref{survdat}). The simulated synthetic observations are such that
\begin{align*}
\pmb{Y}_j & \sim F_{\theta = 0.3, \lambda =  1.}, & j=1, \dots , 150
\\
c_{i,j} & \sim \text{Exp}(\lambda_c = 3.7), & i=1,...,4; \quad j=1, \dots , 150
\\
T_{i,j}  & = \min\{ Y_{i,j}, c_{i,j} \}, & i=1,...,4; \quad j=1, \dots , 150.
\end{align*}
We chose $\lambda_c=3.7$ so we have at least $75\%$ of exact observations for $\pmb{T}$ in each dimension. 
The construction of $F_{\theta,\lambda}$ through a distributional Clayton allows us to calculate explicitly the associated survival function as showcased in Appendix \ref{survappendix}. We use the true survival function for comparison with the fitted survival functions. The estimated survival function are given by the posterior mean 
$$\hat{S}(t_1,t_2,t_3,t_4)= \espc{\espc{S(t_1,t_2,t_3,t_4)}{\pmb{\mu}}}{\pmb{D}},$$
as in \eqref{estim}. The prior distributions of the hyperparameters are
\begin{align*}
\sigma & \sim \text{ Beta}(\mu = 0.4, \sigma^2 = 0.1)
\\
A & \sim \text{ Log-Norm}(\mu =\log (0.88), \sigma^2=3.5).
\end{align*}
We ran $1000$ iterations for the associated Metropolis within Gibbs sampler. Figure 3 and Figure 4 show that the estimated survival functions approximate well the true functions. For comparison purposes, we presented a Kaplan-Meier estimator for the true survival function, see for example \cite{aalen}. As there is no multivariate Kaplan-Meier, we use the next estimator for a multivariate survival function:
\begin{align*}
& \hat{S}_{\text{KM}}(t_1, \ldots , t_d)= 
\\
& S_{\text{KM}}(t_1|T_2>t_2, \ldots, T_d>t_d)
S_{\text{KM}}(t_2|T_3>t_3, \cdots, T_d>t_d)\ldots S_{\text{KM}}(t_d),
\end{align*}
where each $S_{\text{KM}}$ estimator is treated as a univariate Kaplan-Meier estimator restricted to the corresponding set of observations. In Figure 3 and Figure 4, we could appreciate that in the last subplots of each column the Kaplan-Meier can fit poorly as there are less observations on the conditioned Kaplan-Meier functions, as presented in the formula above. 
\begin{figure}
\centering
\begin{subfigure}{0.48\textwidth}
\centering
	 \raisebox{-\height}{\includegraphics[width=0.9\textwidth]{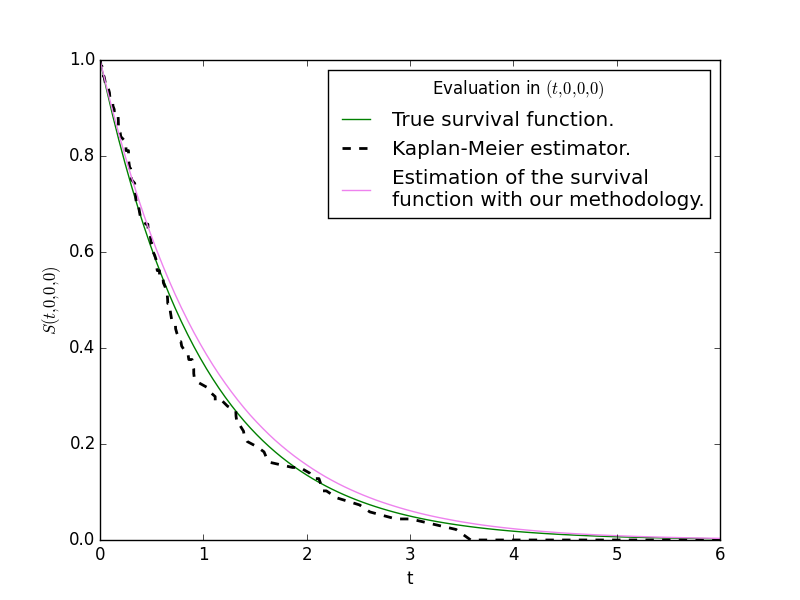} }
	 \vspace{.6ex}
	 \raisebox{-\height}{\includegraphics[width=0.9\textwidth]{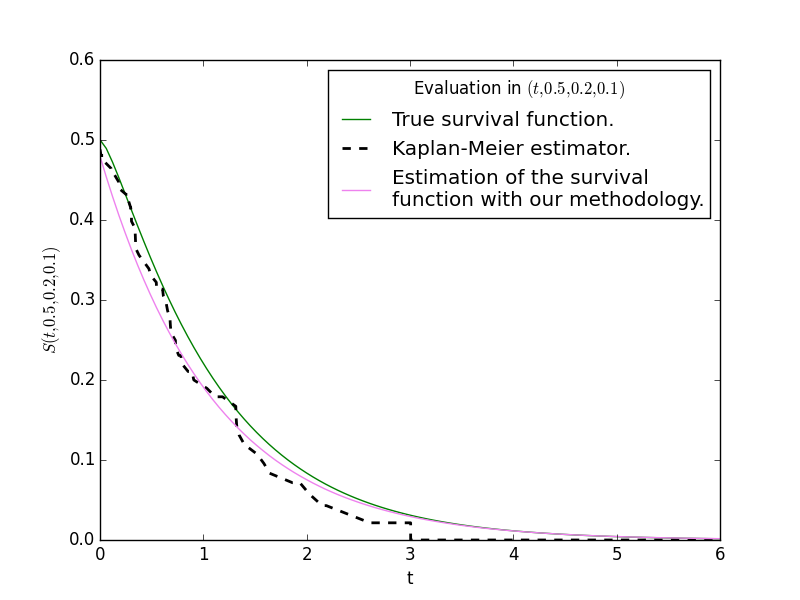} }
	 \vspace{.6ex}
	 \raisebox{-\height}{\includegraphics[width=0.9\textwidth]{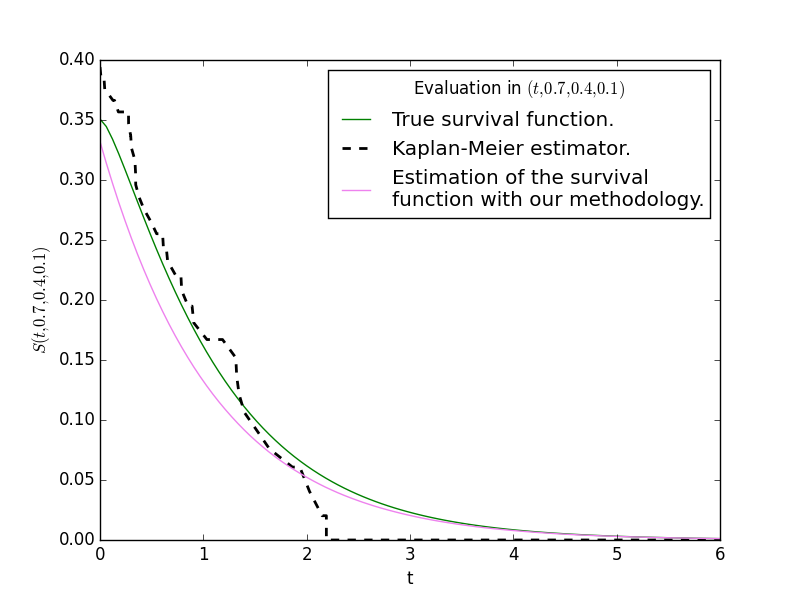} }
    \caption{Fits with first dimension not fixed.}
\end{subfigure}
\hspace{1em}
\begin{subfigure}{0.48\textwidth}
\centering
	\raisebox{-\height}{ \includegraphics[width=0.9\textwidth]{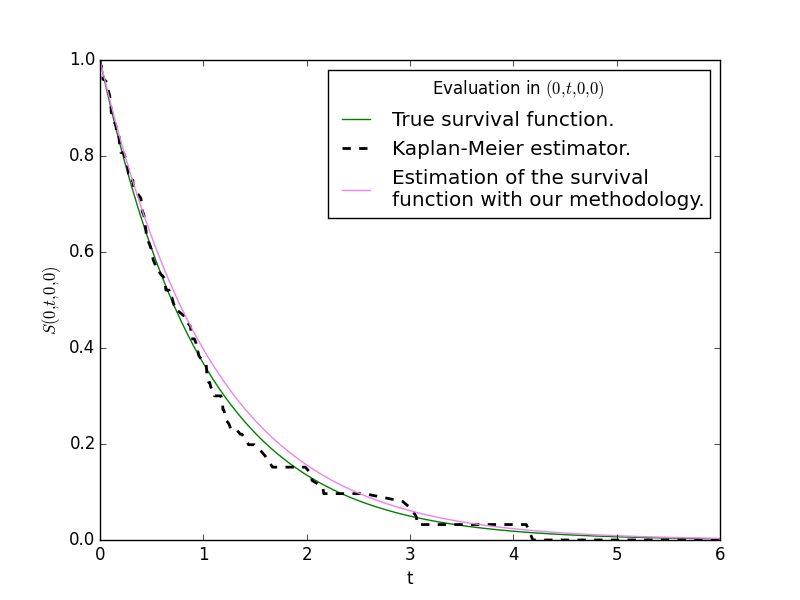} }
	\vspace{.6ex}
	\raisebox{-\height}{ \includegraphics[width=0.9\textwidth]{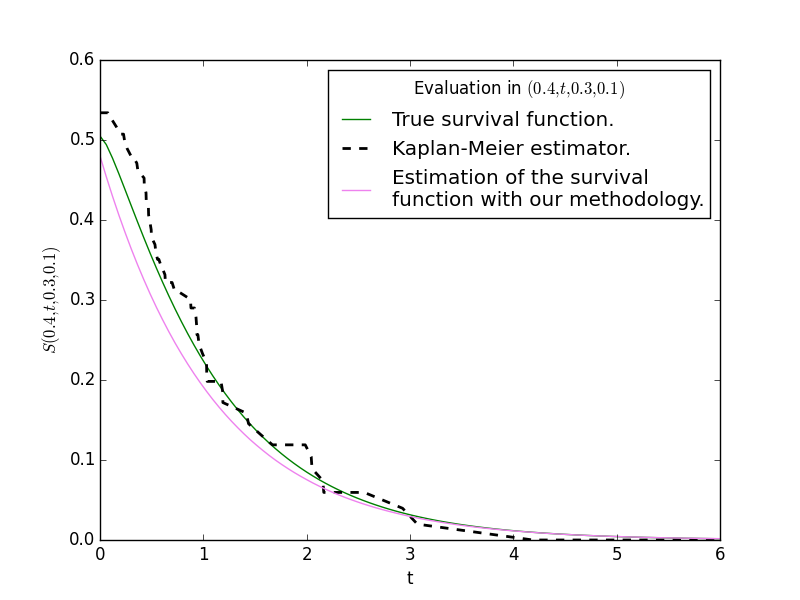} }
	\vspace{.6ex}
	\raisebox{-\height}{ \includegraphics[width=0.9\textwidth]{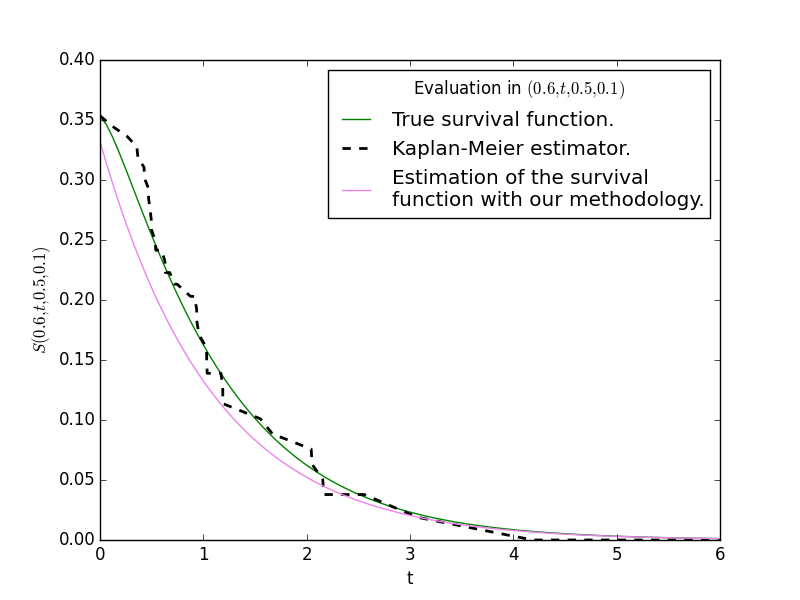} }
    \caption{Fits with second dimension not fixed.}
\end{subfigure}
\caption{Plot of our methodology fits (violet lines), compared with Kaplan-Meier fits (dashed lines) and the true survival function associated to the distributions $F_{\theta=0.3, \lambda = 1.}$ (green lines). The first column shows fits of the survival function with fixed values in all dimensions except the first one; the second column has fixed values in all dimensions except the second one.}
\end{figure}

\begin{figure}
\centering
\begin{subfigure}[]{0.48\textwidth}
\centering
	 \raisebox{-\height}{\includegraphics[width=0.9\textwidth]{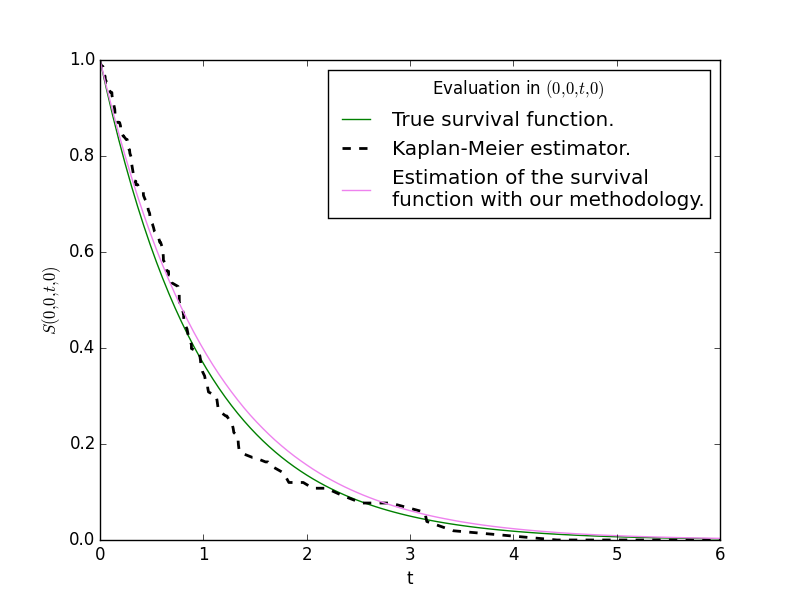} }
	 \vspace{.6ex}
	 \raisebox{-\height}{\includegraphics[width=0.9\textwidth]{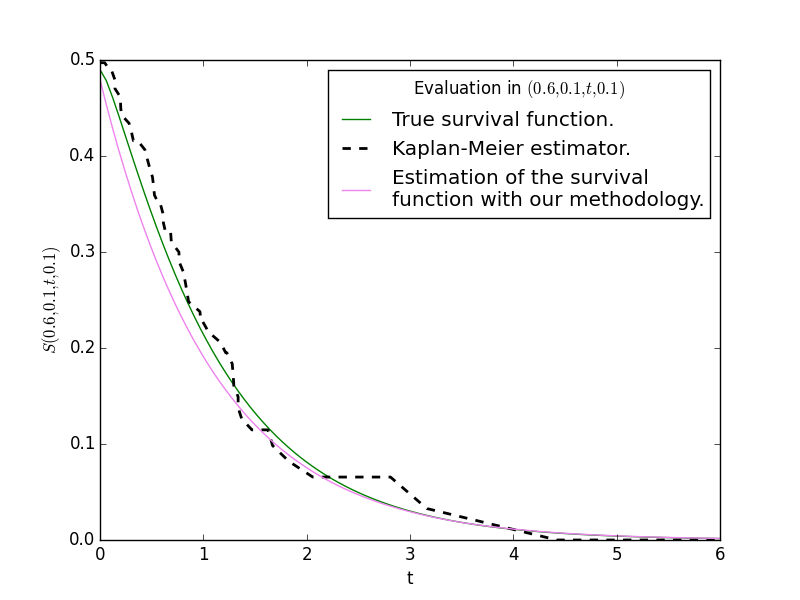} }
	 \vspace{.6ex}
	 \raisebox{-\height}{\includegraphics[width=0.9\textwidth]{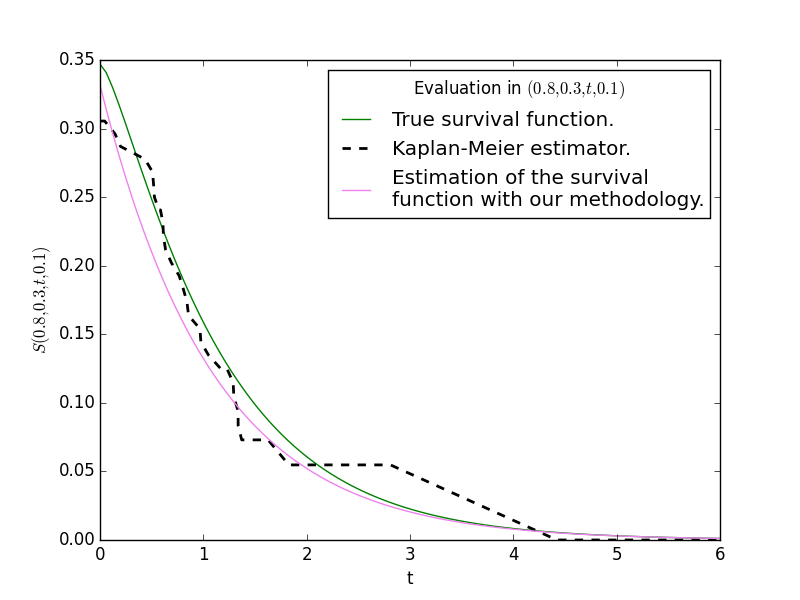} }
    \caption{Fits with third dimension not fixed.}
\end{subfigure}
\hspace{1em}
\begin{subfigure}[]{0.48\textwidth}
\centering
	\raisebox{-\height}{ \includegraphics[width=0.9\textwidth]{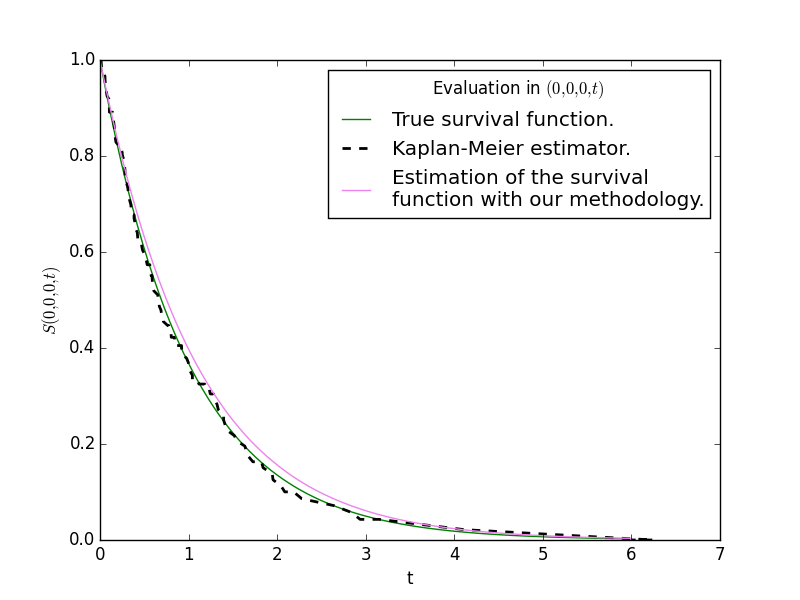} }
	\vspace{.6ex}
	\raisebox{-\height}{ \includegraphics[width=0.9\textwidth]{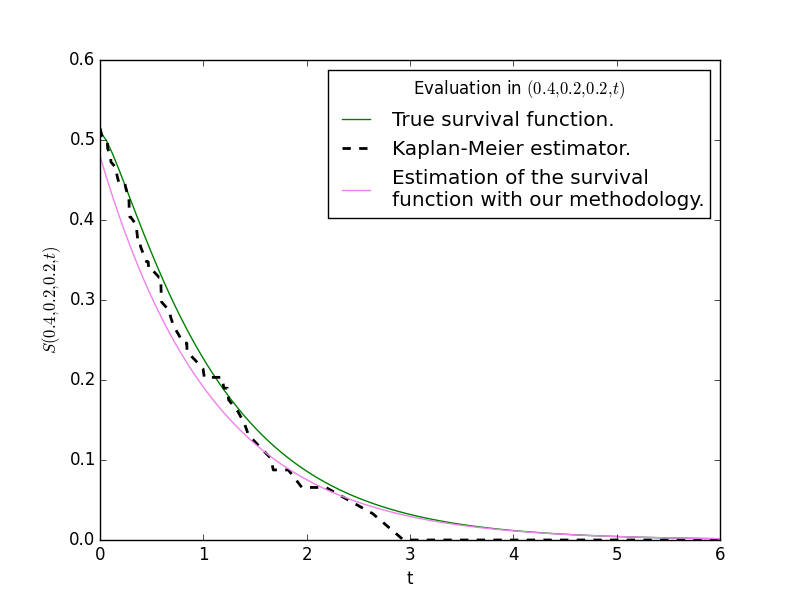} }
	\vspace{.6ex}
	\raisebox{-\height}{ \includegraphics[width=0.9\textwidth]{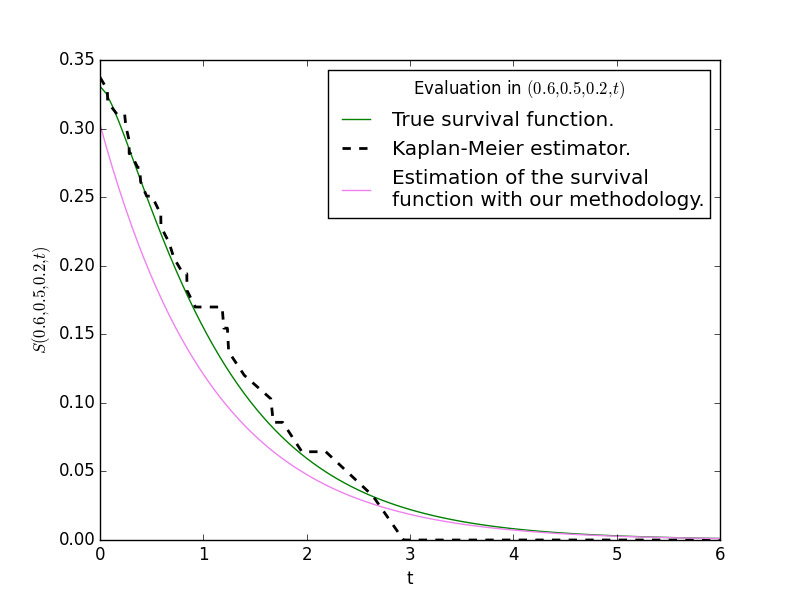} }
    \caption{Fits with fourth dimension not fixed.}
\end{subfigure}
\caption{Plot of our methodology fits (violet lines), compared with Kaplan-Meier fits (dashed lines) and the true survival function associated to the distributions $F_{\theta=0.3, \lambda = 1.}$ (green lines). The first column shows fits of the survival function with fixed values in all dimensions except the third one; the second column has fixed values in all dimensions except the fourth one.}
\end{figure}

\newpage

\section*{Appendix}
\renewcommand{\thesubsection}{A.\arabic{subsection}}
\renewcommand{\theequation}{\Alph{subsection}.\arabic{equation}} 

\subsection{Proof of Proposition 1}\label{proof1}
Given $d\in \{2,3, \dots \}$, we use the notation
$\nu_{-i}(\pmb{s})=\prod_{j=i+1}^d \nu_j(s_j)$
and $\pmb{U}_{k:d}( \pmb{s} ) = \left( U_{k}(s_1), \dots , U_d(s_{d-k+1}) \right)$ for $\pmb{s}\in (\re^+)^d$. Furthermore we define integrals
\begin{equation*}
a_{0,m}(\pmb{\lambda})=\int_{(\re^+)^m}(1-\e^{-<\pmb{\lambda},\pmb{s}>})
\frac{\partial^d}{\partial u_d\cdots\partial u_1}\restr{C_{\theta,m}(\pmb{u})}{\pmb{u}=\pmb{U}_{d-m+1:d}(\pmb{s})}\nu_{-0}(\pmb{s})\mathrm{d}\pmb{s}
\end{equation*}
and
\begin{equation*}
a_{k,m}(\pmb{\lambda})=(-1)^{k+1}\int_{(\re^+)^m}\lambda_1\cdots\lambda_k \e^{-<\pmb{\lambda},\pmb{s}>}
\frac{\partial^{d-k}}{\partial u_d\cdots\partial u_{k+1}}
C_{\theta,m}(\pmb{U}_{d-m+1:d}(\pmb{s}))\nu_{-k}(\pmb{s})\mathrm{d}\pmb{s}
\end{equation*}
where $k\in \{1,\dots, d\}$, $m\in \{0,1,\dots ,d\}$ and $\pmb{\lambda}\in (\re^+)^d$ such that $a_{0,d}(\pmb{\lambda})<\infty$; we also define 
$\prod_{j=k}^la_j=1 \text{ when } k>l$, and denote $\pmb{x}_{-i}$ for the vector $\pmb{x}$ without its $i$-th entry.
\\
An integration by parts shows that
\begin{align*}
a_{0,d}&=-\int_{(\re^+)^{d-1}}\eval{(1-\e^{-<\pmb{\lambda},\pmb{s}>})\frac{\partial^{d-1}}{\partial u_d\cdots\partial u_2}\restr{C_{\theta,d}(\pmb{u})}{\pmb{u}=\pmb{U}_d(\pmb{s})}\nu_{-1}(\pmb{s})}{s_1=0}{s_1=\infty}\mathrm{d}\pmb{s}_{-1}
\\
&\hphantom{=}+\int_{(\re^+)^d}\lambda_1\e^{-<\pmb{\lambda},\pmb{s}>}
\frac{\partial^{d-1}}{\partial u_d\cdots\partial u_2}\restr{C_{\theta,d}(\pmb{u})}{\pmb{u}=\pmb{U}_d(\pmb{s})}\nu_{-1}(\pmb{s})\mathrm{d}\pmb{s}
\\
&=a_{0,d-1}(\pmb{\lambda}_{-1})+a_{1,d}(\pmb{\lambda})
\end{align*}
and in general for $r\in\{1,\dots,d\}$ we get the recursion formula
\begin{equation}\label{recur}
a_{r,d}(\pmb{\lambda})=a_{r,d-1}(\pmb{\lambda}_{-(r+1)})+a_{r+1,d}(\pmb{\lambda})
\numberthis
\end{equation}
We prove the next technical lemma
\begin{lem}
If $a_{0,d}(\pmb{\lambda})<\infty$ then the next $d+1$ identities hold
\begin{align*}
a_{0,d}(\pmb{\lambda})&=\sum_{i=1}^d\psi_i(\lambda_i)-\sum_{\stackrel{\pmb{i}=(i_1,i_2)\in\{1,\dots,d\}^2}{i_1< i_2}}\kappa(\theta;\pmb{\lambda}, \pmb{i}) 
+\cdots
\\
&
\quad \cdots +(-1)^d\sum_{\stackrel{\pmb{i}=(i_1,\dots ,i_{d-1})\in \{1,\dots,d\}^{d-1}}{i_1<\dots <i_{d-1}}} \kappa \left(\theta;\pmb{\lambda}, (i_1,\dots,i_{d-1}) \right)
\\
&
\quad
+(-1)^{d+1}\kappa(\theta;\pmb{\lambda}, (1,\ldots ,d))
\\
a_{1,d}(\pmb{\lambda})&=\psi_1(\lambda_1)-\sum_{i=2}^d\kappa
\left( \theta;\pmb{\lambda}, (1,i) \right)
+\sum_{\stackrel{i_1,i_2\in \{2,\dots , d\}}{i_1<i_2}}\kappa
\left( \theta; \pmb{\lambda}, (1, i_1, i_2) \right) +\cdots 
\\
&\quad \cdots + (-1)^d\sum_{\stackrel{i_1,\dots,i_{d-2}\in \{2,\dots,d\}}{i_1<\cdots<i_{d-2}}}
\kappa(
\left( \theta;\pmb{\lambda}, (1, i_1,\dots ,i_{d-2}) \right)
+(-1)^{d+1}
\kappa
\left( \theta; \pmb{\lambda}, (1,\dots,d) \right) 
\\
&\hphantom{a_{0,s}(\pmb{\lambda})=\int_{(\re^+)^m}(1-\e^{-<\pmb{\lambda},\pmb{x}>})}\vdots
\\
a_{d-1,d}(\pmb{\lambda})&=(-1)^d\kappa
\left( \theta; \pmb{\lambda}, (1,\dots,d-1) \right)
+(-1)^{d+1}\kappa
\left( \theta; \pmb{\lambda}, (1,\dots,d) \right)
\\
a_{d,d}(\pmb{\lambda})&=(-1)^{d+1}\kappa
\left(  \theta; \pmb{\lambda}, (1,\dots, d) \right)
\numberthis  \label{eqsindu}
\end{align*}
\end{lem}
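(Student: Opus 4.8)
The plan is to establish all $d+1$ identities simultaneously by induction, using the recursion \eqref{recur} as the sole structural input; everything else is index bookkeeping, so I would first restate the claim in a uniform way. Writing $[r]$ for $\{1,\dots,r\}$ (so $[0]=\emptyset$), each displayed identity is the case $r=0,1,\dots,d$ of
\[
a_{r,d}(\pmb{\lambda})=\sum_{h=\max\{r,1\}}^{d}(-1)^{h+1}\sum_{\substack{\pmb{i}=(i_1<\dots<i_h)\subseteq\{1,\dots,d\}\\ [r]\subseteq\{i_1,\dots,i_h\}}}\kappa(\theta;\pmb{\lambda},\pmb{i}),
\]
provided one records the elementary identity $\kappa(\theta;\pmb{\lambda},(i))=\psi_i(\lambda_i)$. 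This last identity follows from the definition of $\kappa$ by a single integration by parts: since $C_{\theta,1}(u)=u$ by property (iii) of L\'evy copulas, $\kappa(\theta;\pmb{\lambda},(i))=\lambda_i\int_0^\infty \e^{-\lambda_i s}U_i(s)\,\mathrm{d}s$; using $U_i'=-\nu_i$ and $U_i(\infty)=0$, the boundary term at $0$ contributes $\int_0^\infty\nu_i$ and the remaining integral contributes $-\int_0^\infty \e^{-\lambda_i s}\nu_i(s)\,\mathrm{d}s$, whose sum is $\psi_i(\lambda_i)$. Thus it suffices to prove the displayed formula for $a_{r,d}$.

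I would induct on $d$. The base $d=1$ is immediate from the definitions (again $C_{\theta,1}(u)=u$): $a_{0,1}(\lambda_1)=a_{1,1}(\lambda_1)=\psi_1(\lambda_1)$. Assuming the formula for dimension $d-1$ in every argument, I fix $d$ and argue by downward induction on $r\in\{d,d-1,\dots,0\}$. For $r=d$ the $(d-k)$-fold partial derivative and the product $\nu_{-k}$ in the definition of $a_{k,m}$ are both empty, so directly
\[
a_{d,d}(\pmb{\lambda})=(-1)^{d+1}\lambda_1\cdots\lambda_d\int_{(\re^+)^d}\e^{-\langle\pmb{\lambda},\pmb{s}\rangle}C_{\theta,d}(U_1(s_1),\dots,U_d(s_d))\,\mathrm{d}\pmb{s}=(-1)^{d+1}\kappa(\theta;\pmb{\lambda},(1,\dots,d)),
\]
which is the formula for $r=d$. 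For $r<d$, I apply \eqref{recur}: $a_{r,d}(\pmb{\lambda})=a_{r,d-1}(\pmb{\lambda}_{-(r+1)})+a_{r+1,d}(\pmb{\lambda})$. The second summand is known from the downward induction on $r$; the first is the dimension-$(d-1)$ formula applied with $\{1,\dots,d-1\}$ replaced by $\{1,\dots,d\}\setminus\{r+1\}$, which is legitimate because the $(d-1)$-dimensional margin of $C_{\theta,d}$ obtained by letting $u_{r+1}\to\infty$ is again a Clayton L\'evy copula (property (iii)) and because $\kappa(\theta;\pmb{\lambda},\pmb{i})$ depends on $\pmb{\lambda}$ only through the coordinates listed in $\pmb{i}$. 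Substituting both expressions, the formula for $a_{r,d}$ falls out of the set-theoretic identity: the size-$h$ subsets $\pmb{i}\supseteq[r]$ of $\{1,\dots,d\}$ split according to whether $r+1\in\pmb{i}$; those containing $r+1$ reassemble $a_{r+1,d}(\pmb{\lambda})$, and those omitting $r+1$ are exactly the subsets of $\{1,\dots,d\}\setminus\{r+1\}$ containing $[r]$, which, after the order-preserving relabelling $\{1,\dots,d\}\setminus\{r+1\}\cong\{1,\dots,d-1\}$, reassemble $a_{r,d-1}(\pmb{\lambda}_{-(r+1)})$ with the signs matching.

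The main obstacle is not any single estimate but keeping this bookkeeping consistent: matching the signs $(-1)^{h+1}$ across the two pieces of \eqref{recur}, handling the endpoints $r=0$ (where the forced set is empty, $h$ starts at $1$, and singleton $\kappa$'s are rewritten as $\psi_i$) and $r=d-1$ (where only two subset sizes survive), and checking that the relabelling $\{1,\dots,d\}\setminus\{r+1\}\cong\{1,\dots,d-1\}$ is compatible with the definitions of $\kappa$ and of the truncated vector $\pmb{\lambda}_{-(r+1)}$ — this is precisely where property (iii) of L\'evy copulas (consistency of margins within the Clayton family) is used. The standing hypothesis $a_{0,d}(\pmb{\lambda})<\infty$ is invoked only to guarantee that every integral encountered along the recursion is finite, so that the integrations by parts yielding \eqref{recur} are legitimate at each stage.
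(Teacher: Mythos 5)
Your argument is correct and is essentially the paper's own proof: the same integration-by-parts recursion \eqref{recur} drives an induction on the dimension, your downward induction on $r$ is just an unrolled form of the paper's telescoped identity $a_{r,m}(\pmb{\lambda})=a_{r,m-1}(\pmb{\lambda}_{-(r+1)})+a_{r+1,m-1}(\pmb{\lambda}_{-(r+2)})+\dots+a_{m,m}(\pmb{\lambda})$, and your explicit splitting of the index sets according to whether they contain $r+1$ is precisely the ``combinatorial argument'' the paper leaves implicit. The only cosmetic differences are your base case $d=1$ (the paper starts at $d=2$, quoting Proposition 1 of Epifani and Lijoi) and your derivation of $\kappa(\theta;\pmb{\lambda},(i))=\psi_i(\lambda_i)$, which is cleaner via Fubini, since in the infinite-activity case the boundary term $U_i(0)=\int_0^\infty\nu_i(s)\,\mathrm{d}s$ in your integration by parts is infinite and the two pieces you add are not separately finite.
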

\begin{proof}
We proceed by mathematical induction over the dimension $d$. We observe that from the definition of $\kappa$ we always have
\begin{equation*}
a_{d,d}(\pmb{\lambda})=(-1)^{d+1}\kappa
( \theta; \pmb{\lambda}, (1,\dots,d))
\end{equation*}
For the case $d=2$ we have from Proposition 1 in \cite{epilijoi} that
\begin{equation*}
a_{0,2}(\lambda_1,\lambda_2)=\psi_1(\lambda_1)+\psi_2(\lambda_2)-\kappa(\theta;(\lambda_1,\lambda_2), (1,2))
\end{equation*}
And integrating by parts we obtain
\begin{align*}
a_{1,2}(\lambda_1,\lambda_2)&=\int_{\re^+}\lambda_1\e^{-\lambda_1 s_1}U_1(x_1)\mathrm{d}s_1
-\lambda_1\lambda_2\int_{(\re^+)^2}\e^{-\lambda_1 x_2-\lambda_2 s_2}C_\theta(U_1(s_1),U_2(s_2))\mathrm{d}s_1\mathrm{d}s_2
\\
&=\psi_1(\lambda_1)-\kappa
\left( \theta; \pmb{\lambda}, (1,2) \right)
\end{align*}
Therefore, we get the validity of the equations in (\ref{eqsindu}) for the case $d=2$. Now,  suppose that (\ref{eqsindu}) is true for $d=m-1$, we must show the validity for $d=m$. From the recursion formula (\ref{recur}) we get for $r\in \{0,1,\cdots,d\}$
\begin{align*}
a_{r,m}(\pmb{\lambda})&=a_{r,m-1}(\pmb{\lambda}_{-(r+1)})+a_{r+1,m-1}(\pmb{\lambda}_{-
(r+2)})+\dots +a_{m-1,m-1}(\pmb{\lambda}_{-m})+a_{m,m}(\pmb{\lambda})
\end{align*}
The validity of (\ref{eqsindu}) for $d=m$ follows from the validity for $d=m-1$ and a combinatorial argument.
\end{proof}
Proposition 1 follows by considering the first equation in the Lemma statement and the definition of $a_{0,d}$.
\subsection{Proof of Proposition 2}\label{proof2}
\begin{proof}
Using the independent increments property of CRM's we get that
\begin{align*}
&\prob{Y^{(1)}>t_1,\dots, Y^{(d)}>t_d}=\esp{e^{-\mu_1(0,t_1]-\cdots -\mu_d(0,t_d]}}
\\
& = \esp{\e^{-\mu_{i_1}(0,t_{i_1}]-\cdots -\mu_{i_d}(0,t_{i_1}]}}\esp{\e^{-\mu_{i_2}(t_{i_1},t_{i_2}]-\cdots -\mu_{i_d}(t_{i_1},t_{i_2}]}}\cdots \esp{\e^{-\mu_{i_d}(t_{i_{d-1}},t_{i_d}]}}
\\
&=\e^{-\gamma(t_{i_1})\psi(\pmb{1})}\e^{-[\gamma(t_{i_2})-\gamma(t_{i_1})]\psi_{i_2,\dots, i_d}(\pmb{1})}\cdots e^{-[\gamma(t_{i_d})-\gamma(t_{i_{d-1}})]\psi_{i_d}(\pmb{1})}
\end{align*}
\end{proof}
\subsection{Proof of Proposition 3}\label{proof3}
For notation purposes, in this proof we use the shorthand $\mu(t)=\mu\left( (0,t] \right)$ for a measure $\mu$ and positive real number $t$.
\begin{proof}
For the only if part we define $V_{i,j}=1-\e^{-[\mu_i(t_{i,j})-\mu_i(t_{i,j-1})]}$ for $i\in \{1,\dots,d\}$ and $j\in \{1,\dots ,h\}$ so by supposing $(F_1(t_1),\dots,F_d(t_d))\stackrel{d}{=}(1-\e^{-\mu_{1}(t_1)},\dots,1-\e^{-\mu_{d}(t_d)})$ we have
\begin{align*}
F(t_{1,1},\dots,t_{d,1})&\stackrel{d}{=}[1-\e^{-\mu_1(t_{1,1})}]\cdots[1-\e^{-\mu_d(t_{d,1})}]
\\
&=[1-\e^{-[\mu_1(t_{1,1})-\mu_1(t_{1,0}) ]}]\cdots[1-\e^{-[\mu_d(t_{d,1})-\mu_d(t_{d,0}]}]
\\
&=V_{1,1}\cdots V_{d,1}
\end{align*}
We observe that for $i\in \{2,\dots, h\}$ and $r\in \{1,\dots, d\}$
\begin{align*}
1-\prod_{j=1}^i\bar{V}_{r,j}&=1-\prod_{j=1}^i(1-V_{r,j})=1-\prod_{j=1}^i\e^{-[\mu_r(t_{r,j})-\mu_r(t_{r,j-1}])}=1- \e^{-\mu_r(t_{r,i})}
\end{align*}
So for $i\in \{2,\dots d\}$
\begin{align*}
F(t_{1,i},\dots,t_{d,i})&\stackrel{d}{=}[1-\e^{-\mu_1(t_{1,i})}]\cdots[1-\e^{-\mu_d(t_{d,i})}]
\\&=[1-\prod_{j=1}^i\bar{V}_{1,j}]\cdots [1-\prod_{j=1}^i\bar{V}_{d,j}].
\end{align*}
Concluding the only if part.
\\
For the if part we define $\mu_i(t)=-\log(1-F_i(t))$ for $i\in \{1,\dots , d\}$ and suppose for $h\in\{1,2,\dots\}$, $\pmb{t}_1=(t_{1,1},\dots, t_{d,1}),\dots,\pmb{t}_h=(t_{1,h},\dots,t_{d,h}) $ with 
$t_{0,i}=0<t_{1,i}<\cdots <t_{d,i}$ and 
$t_{j,0}=0<t_{j,1}<\cdots <t_{j,h}$ the existence of independent random vectors  $(V_{1,1},\dots V_{d,1}),\dots , (V_{1,h},\dots V_{d,h})$ such that we have (\ref{ntr}).
\\
Marginalizing in $(\ref{ntr})$, we can apply Theorem 3.1 of \cite{doksum} to each $F_i$ so we obtain that $F_i\sim \text{NTR}(\mu_i)$ for some CRM $\mu_{i}$ that is stochastically continuous, almost surely non-decreasing and has the appropriate limit behaviour.
\\
We observe that 
\begin{align*}
\left(\mu_1(t_j)-\mu_1(t_{j-1}),\dots,\mu_d(t_j)-\mu_d(t_{j-1})
\right)\stackrel{d}{=}\left(-\log(1-V_{1,j}),\dots , -\log(1-V_{d,j})\right)
\end{align*}
Hence $(\mu_{1},\dots,\mu_{d})$ defines a vector of CRM's.
\end{proof}
\subsection{Proof of Theorem 1}\label{proof4}
This proof is not only restricted to the homogeneous L\'evy intensity case; in this general setting, we recall that the Laplace exponent has the form \eqref{laplacenonhomog}. In order to prove the theorem we use the next technical lemma.
\begin{lem}
Let $(\mu_1,\dots,\mu_d)$ be a $d$-variate CRM such that $\mu_1,\dots,\mu_d$ are not independent and let the L\'evy intensity $\nu(\pmb{s},\mathrm{d}t)\mathrm{d}\pmb{s}$ of $(\mu_1,\dots,\mu_d)$ be such that $\eta_t=\nu(\pmb{x},(0,t])$ is differentiable with respect to $t\in \re^+$ at some $t_0\neq 0$ and denote 
$\eta'_{t_0}(\pmb{s})=\partial\restr{\eta_t(\pmb{s}) / \partial t}{t=t_0}$. If $\pmb{q}=(q_1,\dots,q_d)\in \mathbb{N}^d$ are  such that $\max\{q_1,\dots,q_d\}\geq 1$ and $\pmb{r}=(r_1,\dots,r_d)\in ( \re^+ )^d$ are such that $\min\{r_1,\dots,r_d\}\geq 1$, then
\begin{align*}
&\esp{\e^{-r_1\mu_1\left( A_\epsilon \right)-\cdots -r_d\mu_d\left( A_\epsilon \right) }
\left( 1-\e^{-\mu_1\left( A_\epsilon \right)} \right)^{q_1}\cdots
\left( 1-\e^{-\mu_d\left( A_\epsilon \right)} \right)^{q_d}
}
\\
& =\epsilon\int_{(\re^+)^d}\e^{-\langle \pmb{r}, \pmb{s}\rangle }(1-\e^{-s_1})^{q_1}\cdots(1-\e^{-s_d})^{q_d}
\eta'_{t_0}(\pmb{s})\mathrm{d}\pmb{s} +o(\epsilon)
\end{align*}
as $0<\epsilon \to 0$, with $A_\epsilon=(t_0-\epsilon,t_0]$ for some $t_0\in \re^+\setminus \{0\}$.
\end{lem}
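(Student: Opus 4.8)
The plan is to expand the product of differences $(1-\e^{-\mu_i(A_\epsilon)})^{q_i}$ by the multinomial/binomial theorem, reducing the expectation on the left to a finite linear combination of ordinary multivariate Laplace transforms of $\pmb{\mu}$ evaluated on the interval $A_\epsilon=(t_0-\epsilon,t_0]$, and then to take the $\epsilon\to 0$ asymptotics of each such Laplace transform using the differentiability hypothesis on $\eta_t$. First I would write
\[
\Bigl(1-\e^{-\mu_i(A_\epsilon)}\Bigr)^{q_i}=\sum_{k_i=0}^{q_i}\binom{q_i}{k_i}(-1)^{k_i}\e^{-k_i\mu_i(A_\epsilon)},
\]
so that the left-hand side becomes $\sum_{\pmb{k}}\bigl(\prod_i\binom{q_i}{k_i}(-1)^{k_i}\bigr)\,\esp{\e^{-\langle\pmb{r}+\pmb{k},\,(\mu_1(A_\epsilon),\dots,\mu_d(A_\epsilon))\rangle}}$, where the inner product abbreviates $\sum_i(r_i+k_i)\mu_i(A_\epsilon)$. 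By the multivariate Laplace transform \eqref{lap2}, for any $\pmb{\lambda}\in(\re^+)^d$ one has $\esp{\e^{-\sum_i\lambda_i\mu_i(A_\epsilon)}}=\exp\bigl(-\int_{(\re^+)^d}(1-\e^{-\langle\pmb{\lambda},\pmb{s}\rangle})\,\eta_{t_0}(\pmb s)\,\mathrm{d}\pmb s+\int_{(\re^+)^d}(1-\e^{-\langle\pmb{\lambda},\pmb{s}\rangle})\,\eta_{t_0-\epsilon}(\pmb s)\,\mathrm{d}\pmb s\bigr)$, i.e.\ the exponent is $-\int(1-\e^{-\langle\pmb{\lambda},\pmb{s}\rangle})[\eta_{t_0}(\pmb s)-\eta_{t_0-\epsilon}(\pmb s)]\,\mathrm{d}\pmb s$. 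Differentiability of $t\mapsto\eta_t(\pmb s)$ at $t_0$ gives $\eta_{t_0}(\pmb s)-\eta_{t_0-\epsilon}(\pmb s)=\epsilon\,\eta'_{t_0}(\pmb s)+o(\epsilon)$ pointwise, so the exponent equals $-\epsilon\int(1-\e^{-\langle\pmb{\lambda},\pmb{s}\rangle})\eta'_{t_0}(\pmb s)\,\mathrm{d}\pmb s+o(\epsilon)$, and hence $\esp{\e^{-\sum_i\lambda_i\mu_i(A_\epsilon)}}=1-\epsilon\int(1-\e^{-\langle\pmb{\lambda},\pmb{s}\rangle})\eta'_{t_0}(\pmb s)\,\mathrm{d}\pmb s+o(\epsilon)$.

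Substituting this first-order expansion back into the multinomial sum, the constant terms $1$ are multiplied by $\prod_i\binom{q_i}{k_i}(-1)^{k_i}$ and summed over all $\pmb k$ with $0\le k_i\le q_i$; since $\sum_{k_i=0}^{q_i}\binom{q_i}{k_i}(-1)^{k_i}=(1-1)^{q_i}=0$ whenever $q_i\ge 1$, and at least one $q_i\ge 1$ by hypothesis, this constant contribution vanishes. The $O(\epsilon)$ terms reassemble, again by the binomial theorem run backwards, into
\[
\epsilon\int_{(\re^+)^d}\e^{-\langle\pmb{r},\pmb{s}\rangle}(1-\e^{-s_1})^{q_1}\cdots(1-\e^{-s_d})^{q_d}\,\eta'_{t_0}(\pmb s)\,\mathrm{d}\pmb s,
\]
which is precisely the claimed leading term; collecting the $o(\epsilon)$ remainders completes the identity. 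The role of the hypotheses $\min_i r_i\ge 1$ and $\pmb q\in\mathbb N^d$ with $\max_i q_i\ge 1$ is to make all the exponents $r_i+k_i$ strictly positive (and bounded below by $1$) so every Laplace transform above is finite, and to guarantee the cancellation of the zeroth-order term; the assumption that the $\mu_i$ are not independent is what ensures $\eta'_{t_0}$ genuinely depends on all coordinates, though it is not strictly needed for the asymptotic identity itself.

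The main obstacle is making the $o(\epsilon)$ control uniform enough to be integrated: one must justify passing from the pointwise expansion $\eta_{t_0}(\pmb s)-\eta_{t_0-\epsilon}(\pmb s)=\epsilon\,\eta'_{t_0}(\pmb s)+o(\epsilon)$ inside the integral against $(1-\e^{-\langle\pmb{\lambda},\pmb{s}\rangle})$ to an $o(\epsilon)$ bound on $\int(1-\e^{-\langle\pmb{\lambda},\pmb{s}\rangle})[\eta_{t_0}(\pmb s)-\eta_{t_0-\epsilon}(\pmb s)]\,\mathrm{d}\pmb s$, and then from an $o(\epsilon)$ exponent to $o(\epsilon)$ in the exponentiated quantity. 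I would handle this via the integrability condition implicit in the Lévy-intensity requirement $\int\min\{\|\pmb s\|,1\}\,\nu(\mathrm{d}\pmb s,\mathrm{d}t)<\infty$, which controls the small-$\pmb s$ behaviour where $1-\e^{-\langle\pmb{\lambda},\pmb{s}\rangle}\asymp\langle\pmb{\lambda},\pmb{s}\rangle$, together with the finiteness of $\int(1-\e^{-\langle\pmb{\lambda},\pmb{s}\rangle})\eta'_{t_0}(\pmb s)\,\mathrm{d}\pmb s$ (finite because $\eta'_{t_0}$ is itself a Lévy intensity of the same type); the monotonicity of $t\mapsto\eta_t(\pmb s)$ gives the one-sided domination $0\le\eta_{t_0}(\pmb s)-\eta_{t_0-\epsilon}(\pmb s)\le\eta_{t_0}(\pmb s)-\eta_{0}(\pmb s)=\eta_{t_0}(\pmb s)$ needed to invoke dominated convergence on $[\eta_{t_0}(\pmb s)-\eta_{t_0-\epsilon}(\pmb s)]/\epsilon$. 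Once the exponent is shown to be $-\epsilon\,c_{\pmb\lambda}+o(\epsilon)$ with $c_{\pmb\lambda}=\int(1-\e^{-\langle\pmb\lambda,\pmb s\rangle})\eta'_{t_0}(\pmb s)\,\mathrm{d}\pmb s<\infty$, the expansion $\e^{-\epsilon c_{\pmb\lambda}+o(\epsilon)}=1-\epsilon c_{\pmb\lambda}+o(\epsilon)$ is elementary, and the finitely many terms of the multinomial sum introduce no difficulty with the error.
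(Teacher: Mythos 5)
Your proposal is correct and follows essentially the same route as the paper's proof: a binomial expansion of the left-hand side into multivariate Laplace transforms over $A_\epsilon$, a first-order expansion of the resulting Laplace-exponent increments using the differentiability of $\eta_t$ at $t_0$, cancellation of the zeroth-order terms because $\max_i q_i\geq 1$, and reassembly of the first-order terms by the binomial theorem into $\e^{-\langle\pmb{r},\pmb{s}\rangle}\prod_i(1-\e^{-s_i})^{q_i}$. The only difference is organizational (the paper factors out $\e^{-\triangle^{t_0}_{t_0-\epsilon}\psi_t(\pmb{r})}$ and works with increments $\psi_t(\pmb{r}+\pmb{j})-\psi_t(\pmb{r})$, listing the combinatorial identities explicitly), and your dominated-convergence remark addresses an interchange of limit and integral that the paper itself passes over without comment.
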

\begin{proof}
We denote $\triangle_{s_1}^{s_2} f_t(\pmb{r})=f_{s_2}(\pmb{r})-f_{s_1}(\pmb{r})$ for a function $f$ where $s_1,s_2\in \re^+$ and $\pmb{r}\in \re^d$. We use the binomial theorem and apply expectation to write the left hand side in the equation above as
\begin{align*}
&\sum_{j_1=0}^{q_1}\cdots \sum_{j_d=0}^{q_d}\binom{q_1}{j_1}\cdots\binom{q_d}{j_d}(-1)^{j_1+\cdots+j_d}\e^{-[\psi_{t_0}(r_1+j_1,\dots,r_d+j_d)-\psi_{t_0-\epsilon}(r_1+j_1,\dots,r_d+j_d)]}
\\
&=
\e^{-\triangle^{t_0}_{t_0-\epsilon}\psi_t(\pmb{r})}
+\e^{-\triangle^{t_0}_{t_0-\epsilon}\psi_t(\pmb{r})}
\left\{
\sum_{i=1}^d\sum_{j=1}^{q_i}\binom{q_i}{j}(-1)^{j}\e^{-\triangle^{t_0}_{t_0-\epsilon}[\psi_t(\pmb{r}+j\pmb{e}_i)-\psi_t(\pmb{r})]}
\right.
\\
& \quad
+\sum_{\stackrel{i_1,i_2\in\{1,\dots,d\}}{i_1<i_2}}
\sum_{j_1=1}^{q_{i_1}}\sum_{j_2=1}^{q_{i_2}}
\binom{q_{i_1}}{j_1}\binom{q_{i_2}}{j_2}
(-1)^{j_1+j_2}
\e^{-\triangle^{t_0}_{t_0-\epsilon}[\psi_t(\pmb{r}+j_1\pmb{e}_{i_1}+j_2\pmb{e}_{i_2})-\psi_t(\pmb{r})]}
\\
& \quad
+ \dots
+\left.\sum_{j_1=1}^{q_1}\cdots \sum_{j_d=1}^{q_d}\binom{q_1}{j_1}\cdots\binom{q_d}{j_d}(-1)^{\langle \pmb{1},\pmb{j}\rangle}\e^{-\triangle^{t_0}_{t_0-\epsilon}[\psi_t(\pmb{r}+\pmb{j})-\psi_t(\pmb{r})]}\right\} \numberthis \label{eqprp7}
\end{align*}
We note that for $j_i\in\{0,\dots , x_i\}$, $\,i\in\{1,\dots,d\}$, $\pmb{j}=(j_1,\dots , j_d)$, a Taylor expansion yields
\begin{align*}
&\e^{-\triangle^{t_0}_{t_0-\epsilon}[\psi_t(\pmb{r}+\pmb{j})-\psi_t(\pmb{r})]}
 = \e^{-                                                                                                                                                                                                                                                                                                                                                                                                                                                                                                                                                                                                                                                                                                                                                                                                                                                                                                                                                                                                                                                                                                                                                                                                                                                                                                                                                                                                                                                                                                                                                                                                                                                                                                                                                                                                                                                                                                                                                                                                                                                                                                                                                                                                                                                                                                                                                                                                                                                                                                                                                                                                                                                                                                                                                                                                                                                                                                                                                                                                                                                                                                                                                                                                                                                                                                                                                                                                                                                                                                                                                                                                                                                                                                                                                                                                                                                                                                                                                                                                                                                                                                                                                                                                                                                                                                                                                                                                                                                                                                                                                                                                                                                                                                                                                                                                                                                                                                                                                                                                                                                                                                                                                                                                                                                                                                                                                                                                                                                                                                                                                                                                                                                                                                                                                                                                                                                                                                                                                                                                                                                                                                                                                                                                                                                                                                                                                                                                                                                                                                                                                                                                                                                                                                                                                                                                                                                                                                                                                                                                                                                                                                                                                                                                                                                                                                                                                                                                                                                                                                                                                                                                                                                                                                                                                                                                                                                                                                                                                                                                                                                                                                                                                                                                                                                                                                                                                                                                                                                                                                                                                                                                                                                                                                                                                                                                                                                                                                                                                                                                                                                                                                                                                                                                                                                                                                                                                                                                                                                                                                                                                                                                                                                                                                                                                                                                                                                                                                                                                                                                                                                                                                                                                                                                                                                                                                                                                                                                                                                                                                                                                                                                                                                                                                                                                                                                                                                                                                                                                                                                                                                                                                                                                                                                                                                                                                                                                                                                                                                                                                                                                                                                                                                                                                                                                                                                                                                                                                                                                                                                                                                                                                                                                                                                                                                                                                                                                                                                                                                                                                                                                                                                                                                                                                                                                                                                                                                                                                                                                                                                                                                                                                                                                                                                                                                                                                                                                                                                                                                                                                                                                                                                                                                                                                                                                                                                                                                                                                                                                                                                                                                                                                                                                                                                                                                                                                                                                                                                                                                                                                                                                                                                                                                                                                                                                                                                                                                                                                                                                                                                                                                                                                                                                                                                                                                                                                                                                                                                                                                                                                                                                                                                                                                                                                                                                                                                                                                                                                                                                                                                                                                                                                                                                                                                                                                                                                                                                                                                                                                                                                                                                                                                                                                                                                                                                                                                                                                                                                                                                                                                                                                                                                                                                                                                                                                                                                                                                                                                                                                                                                                                                                                                                                                                                                                                                                                                                                                                                                                                                                                                                                                                                                                                                                                                                                                                                                                                                                                                                                                                                                                                                                                                                                                                                                                                                                                                                                                                                                                                                                                                                                                                                                                                                                                                                                                                                                                                                                                                                                                                                                                                                                                                                                                                                                                                                                                                                                                                                                                                                                                                                                                                                                                                                                                                                                                                                                                                                                                                                                \int_{(\re^+)^d}   \e^{-\langle \pmb{r},\pmb{x}\rangle}
(1-\e^{-\langle \pmb{j},\pmb{s}\rangle })
\triangle^{t_0}_{t_0-\epsilon}\eta_t(\pmb{s})\mathrm{d}\pmb{s}}
\\
&=1-\epsilon\int_{(\re^+)^d}   \e^{-\langle \pmb{r},\pmb{s}\rangle}
(1-\e^{-\langle \pmb{j},\pmb{s}\rangle })\eta'_{t_0}(\pmb{s})\mathrm{d}\pmb{s} +o(\epsilon)
\numberthis \label{taylor}
\end{align*}
Furthermore by the binomial theorem we get the next $d$ identities
\begin{itemize}
\item [(1)] 
$
\displaystyle
\sum_{i=1}^d\sum_{j=1}^{q}\binom{q}{j}(-1)^{j}
(1-\e^{-js})=-\sum_{i=1}^d (1-\e^{-s})^{q} $
\item [(2)] 
$
\displaystyle
\sum_{\stackrel{i_1,i_2\in\{1,\dots,d\}}{i_1<i_2}}
\sum_{j_1=1}^{q_{i_1}}\sum_{j_2=1}^{q_{i_2}}
\binom{q_{i_1}}{j_1}\binom{q_{i_2}}{j_2}
(-1)^{j_1+j_2}(1-\e^{-j_1s_{i_1}-j_2 s_{i_2}})
\\
=\sum_{\stackrel{i_1,i_2\in\{1,\dots,d\}}{i_1<i_2}}
\left\{
(1-\e^{-s_{i_1}})^{q_{i_1}}+(1-\e^{-s_{i_2}})^{q_{i_2}}-(1-\e^{-s_{i_1}})^{q_{i_1}}(1-\e^{-s_{i_2}})^{q_{i_2}}
\right\} $
\item [] \quad \quad\quad \quad\quad \quad\quad 
\quad\quad \quad\quad \quad\quad \quad\quad \quad
\vdots 
\item [(d-1)] 
$
\displaystyle
\sum_{\stackrel{i_1,\dots,i_{d-1}\in\{1,\dots,d\}}{i_1<\cdots <i_{d-1}}}
\sum_{j_1=1}^{q_{i_1}}\cdots \sum_{j_{d-1}=1}^{q_{i_{d-1}}}\binom{q_{i_1}}{j_1}\cdots\binom{q_{i_{d-1}}}{j_{d-1}}(-1)^{j_1+\cdots +j_{d-1}}
(1-\e^{-j_1s_{i_1}-\cdots-j_{d-1}s_{i_{d-1}} })
\\
=\sum_{\stackrel{i_1,\dots,i_{d-1}\in\{1,\dots,d\}}{i_1<\cdots <i_{d-1}}}
 \Bigg \{
(-1)^{d-1}\sum_{j=1}^{d-1}(1-\e^{-s_{i_j}})^{q_{i_j}}
+ 
\\(-1)^{d-2}\sum_{\stackrel{j_1,j_2\in\{i_1,\dots,i_{d-1}\}}{j_1<j_2}}(1-\e^{-s_{j_1}})^{q_{j_1}}(1-\e^{-s_{j_2}})^{q_{j_2}}
+\cdots -(1-\e^{-s_{i_1}})^{q_{i_1}}\cdots (1-\e^{-s_{i_{d-1}}})^{q_{i_{d-1}}}
\Bigg\}
\\
$
\item [(d)]
$
\displaystyle
\sum_{j_1=1}^{q_1}\cdots\sum_{j_d=1}^{q_d}\binom{q_1}{j_1}\cdots\binom{q_d}{j_d}(-1)^{\langle\pmb{1},\pmb{j}\rangle}
(1-\e^{-\langle\pmb{j},\pmb{s} \rangle })
=(-1)^d\sum_{j=1}^d(1-\e^{-s_j})^{q_j} +
\\
(-1)^{d-1}\sum_{\stackrel{j_1,j_2\in\{1,\dots,d\}}{j_1<j_2}}(1-\e^{-s_{j_1}})^{q_{j_1}}(1-\e^{-s_{j_2}})^{q_{j_2}}+\cdots -(1-\e^{-s_{i_1}})^{q_{i_1}}\cdots (1-\e^{-s_{i_d}})^{q_{i_d}}
$
\end{itemize}
So we have that (\ref{eqprp7}) becomes
\begin{align*}
&\e^{-\triangle^{t_0}_{t_0-\epsilon}\psi_t(\pmb{r})}\left\{ 1 +
\sum_{i=1}^d\sum_{j=1}^{q_i}\binom{q_i}{j}(-1)^{j}
-\epsilon \int_{(\re^+)^d} 
\e^{-\langle \pmb{r},\pmb{s} \rangle}
\sum_{i=1}^d\sum_{j=1}^{q_i}\binom{q_i}{j}(-1)^{j}
(1-\e^{-j_1s_1})\eta'_{t_0}(\pmb{s})\mathrm{d}\pmb{s}
\right.
\\
& + \sum_{\stackrel{i_1,i_2\in\{1,\dots,d\}}{i_1<i_2}}
\sum_{j_1=1}^{q_{i_1}}\sum_{j_2=1}^{q_{i_2}}
\binom{q_{i_1}}{j_1}\binom{q_{i_2}}{j_2}
(-1)^{j_1+j_2}
\\
&-\epsilon \int_{(\re^+)^d} \e^{-\langle \pmb{r},\pmb{s}\rangle}
\sum_{\stackrel{i_1,i_2\in\{1,\dots,d\}}{i_1<i_2}}
\sum_{j_1=1}^{q_{i_1}}\sum_{j_2=1}^{q_{i_2}}
\binom{q_{i_1}}{j_1}\binom{q_{i_2}}{j_2}
(-1)^{j_1+j_2}
(1-\e^{-j_1 s_{i_1}-j_2 s_{i_2}})\eta'_{t_0}(\pmb{s})\mathrm{d}\pmb{s}
\\
&
+\dots +\sum_{j_1=1}^{q_1}\cdots \sum_{j_d=1}^{q_d}\binom{q_1}{j_1}\cdots \binom{q_d}{j_d}(-1)^{\langle \pmb{1}, \pmb{j} \rangle }
\\
&\left.-\epsilon \int_{(\re^+)^d} \e^{-\langle \pmb{r},\pmb{s} \rangle}
\sum_{j_1=1}^{q_1}\cdots \sum_{j_d=1}^{q_d}\binom{q_1}{j_1}\cdots \binom{q_d}{j_d}
(-1)^{\langle\pmb{1},\pmb{j}\rangle}
(1-\e^{-\langle \pmb{j},\pmb{s}\rangle })\eta'_{t_0}(\pmb{s})\mathrm{d}\pmb{s}+o(\epsilon) \right\}
\\
&=\e^{-\triangle^{t_0}_{t_0-\epsilon}\psi_t(\pmb{r})}\left\{ \epsilon \int_{(\re^+)^d} \e^{-\langle \pmb{r},\pmb{s} \rangle }(1-\e^{-s_1})^{q_1}\cdots(1-\e^{-s_d})^{q_d}\eta'_{t_0}(\pmb{s})\mathrm{d}\pmb{s}+o(\epsilon) \right\}
\\
&=\left\{1+o(1) \right\}\left\{ \epsilon \int_{(\re^+)^d} \e^{-\langle \pmb{r},\pmb{s} \rangle }(1-\e^{-s_1})^{q_1}\cdots(1-\e^{-s_d})^{q_d}\eta'_{t_0}(\pmb{s})\mathrm{d}\pmb{s}+o(\epsilon) \right\}
\end{align*}
\begin{align*}
&=\left\{ \epsilon \int_{(\re^+)^d} \e^{-\langle \pmb{r},\pmb{s} \rangle }(1-\e^{-s_1})^{q_1}\cdots(1-\e^{-s_d})^{q_d}\eta'_{t_0}(\pmb{s})\mathrm{d}\pmb{s}+o(\epsilon) \right\}
\end{align*}
\end{proof}
\noindent Define 
\begin{equation*}
\Gamma_{\pmb{D},\epsilon}=\bigcap_{i=1}^d\bigcap_{j=1}^k\left\{
((t_1^{(i)},\delta_1^{(i)},\dots , t_{n_1}^{(i)} ,\delta_{n_1}^{(i)})\; : \; m_i^c\left(\{T_{(j)}\}\right)=n^c_{i,j}  \; , \; m_i^e\left( (T_{(j)}-\epsilon , T_{(j)}] \right) = n^e_{i,j}
\right\}
\end{equation*}
so that
\begin{equation*}
\espc{\e^{-\lambda_1\mu_1(0,t]-\cdots \lambda_d\mu_d (0,t]}}{\pmb{D}}=\lim_{\epsilon \to 0}\frac{\esp{\e^{-\lambda_1\mu_1(0,t]-\cdots -\lambda_d\mu_d(0,t]}\mathbbm{1}_{\Gamma_{\pmb{D},\epsilon}}\left(\pmb{D}\right)}}{\prob{\pmb{D}\in \Gamma_{\pmb{D},\epsilon}}}
\end{equation*}
We observe that defining $T_{(0)}=0$, $\; \bar{n}_{i,k+1}^e=0$  for $i\in \{1,\dots , d\}$ and selecting $\epsilon$ sufficiently small such that
$t\not\in (T_{(j)}-\epsilon,T_{(j)})$ for all $j\in\{1,\dots,k\}$
\\
\begin{align*}
&\espc{\e^{-\lambda_1\mu_1(0,t]-\cdots\lambda_d\mu_d(0,t]}\mathbbm{1}_{\Gamma_{\pmb{D},\epsilon}}\left(\pmb{D}\right)}{(\mu_1,\dots,\mu_d)}
\\
&=\prod_{i=1}^d e^{-\lambda_i\mu_i(0,t]}\prod_{j=1}^k\e^{-n_{i,j}^c\mu_i(0,T_{(j)}]- n_{i,j}^e
\mu_i(0,T_{(j)}-\epsilon]}\left(
1-\e^{-\mu_i(T_{(j)}-\epsilon,T_{(j)}]}
\right)^{n_{i,j}^e}
\\
&=\prod_{i=1}^d e^{
-\lambda_i \mathbbm{1}_{(0,t]}(T_{(k)}) \mu_i(T_{(k)},t]}
\prod_{j=1}^k\Big\{
\e^{
-\lambda_i\mathbbm{1}_{ 
(0,t)
}(T_{(j-1)})
\mu_i(T_{(j-1)},\min \{ t, T_{(j)}-\epsilon \}]
-\lambda_i \mathbbm{1}_{
(0,t]} (T_{(j)}) 
\mu_i(T_{(j)}-\epsilon,T_{(j)}]
} 
\\
&\quad \times
\e^{-n_{i,j}^c\sum_{r=1}^j\left(\mu_i(T_{(r)}-\epsilon,T_{(r)}]+\mu_i(T_{(r-1)},T_{(r)}-\epsilon]
\right)
- n_{i,j}^e \sum_{r=1}^j \mu_i(T_{(r-1)},T_{(r)}-\epsilon]
 -n_{i,j}^e \sum_{r=1}^{j-1} \mu_i(T_{(r)}-\epsilon,T_{(r)}]}
\\
&\quad \times \left(
1-\e^{-\mu_i(T_{(j)}-\epsilon,T_{(j)}]}
\right)^{n_{i,j}^e} \Big\}
\\
&=\prod_{i=1}^d 
\left\{\e^{-\lambda_i\mathbbm{1}_{(0,t]}(T_{(k)})\mu_i(T_{(k)},t]- \sum_{j=1}^k
n_{i,j}^c\sum_{r=1}^j\left(\mu_i(T_{(r)}-\epsilon,T_{(r)}]+\mu_i(T_{(r-1)},T_{(r)}-\epsilon]
\right)}\right.
\\
& \quad \times
\e^{-\sum_{j=1}^k n_{i,j}^e \sum_{r=1}^j \mu_i(T_{(r-1)},T_{(r)}-\epsilon]
-\sum_{j=1}^kn_{i,j}^e \sum_{r=1}^{j-1} \mu_i(T_{(r)}-\epsilon,T_{(r)}]}
\\
& \quad \times
\prod_{j=1}^k
\left\{
\e^{
-\lambda_i \mathbbm{1}_{(0,t)}(T_{(j-1)})
\mu_i(T_{(j-1)},\min \{ t, T_{(j)}-\epsilon\} ]
-\lambda_i\mathbbm{1}_{(0,t]}(T_{(j)})
\mu_i(T_{(j)}-\epsilon,T_{(j)}]
} \left(
\left. 1-\e^{-\mu_i(T_{(j)}-\epsilon,T_{(j)}]}
\right)^{n_{i,j}^e} 
\right\}
\right\}
\\
&=\prod_{i=1}^d \left\{
\prod_{j=1}^k \left\{
\e^{-\left[
\lambda_i\mathbbm{1}_{(0,t]}(T_{(j)})+\bar{n}_{i,j}^c+\bar{n}_{i,j+1}^e
\right]\mu_i(T_{(j)}-\epsilon,T_{(j)}]} \left( 1-\e^{-\mu_i(T_{(j)}-\epsilon,T_{(j)}]}
\right)^{n_{i,j}^e} 
\right\} e^{-\lambda_i\mathbbm{1}_{(0,t]}(T_{(k)})\mu_i(T_{(k)},t]}
\right.
\\
&
\left.
\quad \times
\prod_{j=1}^k \left\{
\e^{
-\lambda_i \mathbbm{1}_{(0,t)}(T_{(j-1)})
\mu_i(T_{(j-1)},\min \{ t, T_{(j)}-\epsilon\} ]
-\bar{n}_{i,j}^c\mu_i(T_{(j-1)},T_{(j)}-\epsilon]-\bar{n}_{i,j}^e \mu_i(T_{(j-1)},T_{(j)}-\epsilon]
}
\right\} \right\}
\end{align*}
So defining 
\begin{align*}
I_{1,\epsilon}&=\prod_{j=1}^k\prod_{i=1}^d\left\{\e^{-\left[
\lambda_i\mathbbm{1}_{(0,t]}(T_{(j)})+\bar{n}_{i,j}^c+\bar{n}_{i,j+1}^e
\right]\mu_i(T_{(j)}-\epsilon,T_{(j)}]} \left( 1-\e^{-\mu_i(T_{(j)}-\epsilon,T_{(j)}]}
\right)^{n_{i,j}^e} 
\right\}
\\
I_{2,\epsilon}&=
\prod_{i=1}^d e^{-\lambda_i\mathbbm{1}_{(0,t]}(T_{(k)})\mu_i(T_{(k)},t]}
\prod_{j=1}^k \left\{
\e^{
-\lambda_i \mathbbm{1}_{(0,t)}(T_{(j-1)})
\mu_i(T_{(j-1)},\min \{ t, T_{(j)}-\epsilon\} ]
-
(
\bar{n}_{i,j}^c+ \bar{n}_{i,j}^e
)
\mu_i(T_{(j-1)},T_{(j)}-\epsilon]
}
\right\}
\end{align*}
We get from the independence property of CRM's that
\begin{equation}\label{espnumerador}
\esp{\e^{-\lambda_1\mu_1(0,t]-\cdots -\lambda_d\mu_d(0,t]}\mathbbm{1}_{\Gamma_{\pmb{D},\epsilon}}\left(\pmb{D}\right)}=\esp{I_{1,\epsilon}}\esp{I_{2,\epsilon}}
\end{equation}
We observe that for $r_i=\lambda_i\mathbbm{1}_{(0,t]}(T_{(j)})+\bar{n}_{i,j}^c+\bar{n}_{i,j+1}^e$, $\,i\in \{1,\dots,d\}$ we have that $\min\{r_1,\dots,r_d\}\geq 1$ and for $j\in\{1,\dots ,k\}$ such that $T_{(j)}$ is an exact observation we have that $\max\{n_{1,j},\dots,n_{d,j}\}\geq 1$ so lemma 2 can be applied yielding
\begin{align*}
&\esp{\prod_{i=1}^d\left\{\e^{-\left[
\lambda_i\mathbbm{1}_{(0,t]}(T_{(j)}+\bar{n}_{i,j}^c+\bar{n}_{i,j+1}^e
\right]\mu_i(T_{(j)}-\epsilon,T_{(j)}]} \left( 1-\e^{-\mu_i(T_{(j)}-\epsilon,T_{(j)}]}
\right)^{n_{i,j}^e} \right\}}
\\
&=\epsilon \int_{(\re^+)^d} \prod_{i=1}^d \left\{ \e^{-[\lambda_i\mathbbm{1}_{(0,t]}(T_{(j)})+\bar{n}_{i,j}^c+\bar{n}_{i,j+1}^e]s_i}(1-\e^{-s_i})^{n_{i,j}^e}
\right\}
\eta'_t{T_{(j)}}(\pmb{s}) \mathrm{d}\pmb{s}+o(\epsilon) \numberthis \label{i12}
\end{align*}
On the other hand, for $j\not \in \mathcal{J}=\{j\, : \, T_{(j)} \text{ is an exact observation}\}$ we have $n_{i,j}^e=0$ so by the continuity of $\eta_t(\pmb{s})$ in $t$ we have
\begin{align*}
&\lim_{\epsilon \to 0}\esp{\prod_{i=1}^d\left\{\e^{-\left[
\lambda_i\mathbbm{1}_{(0,t]}(T_{(j)}+\bar{n}_{i,j}^c+\bar{n}_{i,j+1}^e
\right]\mu_i(T_{(j)}-\epsilon,T_{(j)}]} \left( 1-\e^{-\mu_i(T_{(j)}-\epsilon,T_{(j)}]}
\right)^{n_{i,j}^e} \right\}}
\\
&=\lim_{\epsilon \to 0} \esp{\prod_{i=1}^d\left\{\e^{-\left[
\lambda_i\mathbbm{1}_{(0,t]}(T_{(j)}++\bar{n}_{i,j}^c+\bar{n}_{i,j+1}^e
\right]\mu_i(T_{(j)}-\epsilon,T_{(j)}]} \right\}}=1 \numberthis \label{i11}
\end{align*}
From (\ref{i12}), (\ref{i11}) and the independence property of CRM's we obtain
\begin{align*}
&\lim_{\epsilon \to 0}\esp{I_{1,\epsilon}}=\lim_{\epsilon \to 0}\prod_{j\in \mathcal{J}}\left\{
\epsilon \int_{(\re^+)^d} \prod_{i=1}^d \left\{ \e^{-[\lambda_i\mathbbm{1}_{(0,t]}(T_{(j)})+\bar{n}_{i,j}^c+\bar{n}_{i,j+1}^e]s_i}(1-\e^{-s_i})^{n_{i,j}^e}
\right\}
\eta'_{T_{(j)}}(\pmb{s}) \mathrm{d}\pmb{s}+o(\epsilon)\right\}
\end{align*}
Also by continuity and independence, defining $\pmb{\lambda}=(\lambda_1,\dots,\lambda_d)$, we get
\begin{align*}
&\lim_{\epsilon \to 0}\esp{I_{2,\epsilon}}=
\e^{-[\psi_{t}\left(\mathbbm{1}_{(0,t]} (T_{(k)})\pmb{\lambda}\right)-\psi_{T_{(k)}}\left(\mathbbm{1}_{(0,t]}(T_{(k)})\pmb{\lambda}\right)]}\times
\\
&
\times\prod_{j=1}^k\Bigg\{\e^{
-[
\psi_{t\wedge T_{(j)}}
\left(
\mathbbm{1}_{(0,t]}(T_{(j-1)})\pmb{\lambda}+\bar{\pmb{n}}_{j}^c+\bar{\pmb{n}}_{j}^e
\right)
-\psi_{T_{(j-1)}}\left(\mathbbm{1}_{(0,t]}
(T_{(j-1)})\pmb{\lambda}+\bar{\pmb{n}}_{j}^c+\bar{\pmb{n}}_{j}^e
\right)
]
-[
\psi_{ T_{(j)}}
\left( \bar{\pmb{n}}_{j}^c+\bar{\pmb{n}}_{j}^e
\right)
-
\psi_{t\wedge T_{(j)}}
\left( \bar{\pmb{n}}_{j}^c+\bar{\pmb{n}}_{j}^e
\right)
]
}
\Bigg\}
\end{align*}
So by (\ref{espnumerador}), (\ref{i11}) and (\ref{i12}) we get that
\begin{align*}
&\lim_{\epsilon \to 0} \esp{\e^{-\lambda_1\mu_1(0,t]-\dots -\lambda_d\mu_d(0,t]}\mathbbm{1}_{\Gamma_{\pmb{D},\epsilon}}\left(\pmb{D}\right)}
=\e^{-
\triangle_{T_{(k)}}^t 
\psi_t \left(\mathbbm{1}_{(0,t]} (T_{(k)})\pmb{\lambda}\right) 
-\sum_{j=1}^k
\triangle_{T_{(j-1)}}^{t\wedge T_{(j)}} 
\psi_t \left(\mathbbm{1}_{(0,t]} (T_{(j-1)})\pmb{\lambda}
+\bar{\pmb{n}}_{j}^c+\bar{\pmb{n}}_{j}^e
\right) 
}
\\
&
\times\prod_{j\in \mathcal{J}}\lim_{\epsilon \to 0}\left\{
\epsilon \int_{(\re^+)^d} \prod_{i=1}^d \left\{ \e^{-[\lambda_i\mathbbm{1}_{(0,t]}(T_{(j)})+\bar{n}_{i,j}^c+\bar{n}_{i,j+1}^e]s_i}(1-\e^{-s_i})^{n_{i,j}^e}
\right\}\eta'_{T_{(j)}}(\pmb{s}) \mathrm{d}\pmb{s}+o(\epsilon)\right\}\\
&\hspace{10cm}\times
\e^{-
\sum_{j=1}^k
\triangle_{t\wedge T_{(j)}}^{T_{(j)}} 
\psi_t \left(
\bar{\pmb{n}}_{j}^c+\bar{\pmb{n}}_{j}^e
\right) 
}
\end{align*}
And similarly
\begin{align*}
&\lim_{\epsilon \to 0}\prob{\pmb{D}\in \Gamma_{\pmb{D},\epsilon}}=
\e^{
-\sum_{j=1}^k
\triangle_{T_{(j-1)}}^{T_{(j)}} 
\psi_t \left(
\bar{\pmb{n}}_{j}^c+\bar{\pmb{n}}_{j}^e
\right) 
}
\\
&\qquad \times
\prod_{j\in \mathcal{J}}\lim_{\epsilon \to 0}\left\{
\epsilon \int_{(\re^+)^d} \prod_{i=1}^d \left\{ \e^{-(\bar{n}_{i,j}^c+\bar{n}_{i,j+1}^e)s_i}(1-\e^{-s_i})^{n_{i,j}^e}
\right\}\eta'_t{T_{(j)}}(\pmb{s}) \mathrm{d}\pmb{s}+o(\epsilon)\right\}
\end{align*}
We set $T_{(k+1)}=\infty$ so we conclude
\begin{align*}
&\espc{\e^{-\lambda_1\mu_1(0,t]-\cdots -\lambda_d\mu_d (0,t]}}{\pmb{D}}=\lim_{\epsilon \to 0}\frac{\esp{\e^{-\lambda_1\mu_1(0,t]-\cdots -\lambda_d\mu_d(0,t]}\mathbbm{1}_{\Gamma_{\pmb{D},\epsilon}}\left(\pmb{D}\right)}}{\prob{\pmb{D}\in \Gamma_{\pmb{D},\epsilon}}}
\\
&=
\e^{
-\sum_{j=1}^{k+1}
\triangle_{T_{(j-1)}}^{t\wedge T_{(j)}} 
\left[
\psi_t \left(\mathbbm{1}_{(0,t]} (T_{(j-1)})\pmb{\lambda}
+\bar{\pmb{n}}_{j}^c+\bar{\pmb{n}}_{j}^e
\right) 
-\psi_t \left(
\bar{\pmb{n}}_{j}^c+\bar{\pmb{n}}_{j}^e
\right) 
\right]
}
\\
&\quad \times
\prod_{j\in \mathcal{J}}\lim_{\epsilon \to 0} \left\{
\frac{
\epsilon \int_{(\re^+)^d} \prod_{i=1}^d \left\{ \e^{-[\lambda_i\mathbbm{1}_{(0,t]}(T_{(j)})+\bar{n}_{i,j}^c+\bar{n}_{i,j+1}^e]s_i}(1-\e^{-s_i})^{n_{i,j}^e}
\right\}\eta'_{T_{(j)}}(\pmb{s}) \mathrm{d}\pmb{s}+o(\epsilon)
}
{
\epsilon \int_{(\re^+)^d} \prod_{i=1}^d \left\{ \e^{-[\bar{n}_{i,j}^c+\bar{n}_{i,j+1}^e]s_i}(1-\e^{-s_i})^{n_{i,j}^e}
\right\}\eta'_{T_{(j)}}(\pmb{s}) \mathrm{d}\pmb{s}+o(\epsilon)
}
\right\}
\\
&=\e^{-
\sum_{j=1}^{k+1}
\int_{(\re^+)^d \times (T_{(j-1)},t\wedge T_{(j)}] } 
\mathbbm{1}_{(0,t]} (T_{(j-1)})
\left(
1-\e^{-\langle \pmb{\lambda} \, , \, \pmb{s} \rangle}
\right)\e^{-\langle \bar{\pmb{n}}_{j}^c+\bar{\pmb{n}}_{j}^e\, , \, \pmb{s} \rangle }
\nu(\mathrm{d}\pmb{s}, \mathrm{d}u)
}
\\
&\quad \times
\prod_{j\in \mathcal{J}}\left\{
\frac{
 \int_{(\re^+)^d} \prod_{i=1}^d \left\{ \e^{-[\lambda_i\mathbbm{1}_{(0,t]}(T_{(j)})+\bar{n}_{i,j}^c+\bar{n}_{i,j+1}^e]s_i}(1-\e^{-s_i})^{n_{i,j}^e}
 \right\}\eta'_{T_{(j)}}(\pmb{s}) \mathrm{d}\pmb{s}
}
{
 \int_{(\re^+)^d} \prod_{i=1}^d \left\{ \e^{-[\bar{n}_{i,j}^c+\bar{n}_{i,j+1}^e]s_i}(1-\e^{-s_i})^{n_{i,j}^e}\right\}\eta'_{T_{(j)}}(\pmb{s}) \mathrm{d}\pmb{s}
}
\right\}
\end{align*}

\subsection{Simulation Algorithm}\label{algorithmappendix}

We use a Metropolis within Gibbs sampler to draw simulations from $\sigma|\pmb{\mathcal{D}}$ and $A|\pmb{\mathcal{D}}$ as in Section 4. We recall that Corollary 2 gives the likelihood $l(\sigma, A;\pmb{\mathcal{D}})$ and we denote $p_\sigma$, $p_A$ for the prior distributions of $\sigma$ and $A$ as in Section 4. Given initial values 
$\sigma^{(0)}$, $A^{0)}$, the algorithm is as follows
\begin{enumerate}
\item Draw $A^{(i+1)}$ from a Metropolis-Hastings sampler with proposal distribution $g(x'|x)\sim \text{Log-Norm}(\log(x), 1)$ and target distribution $$l(\sigma^{(i)},x;\pmb{\mathcal{D}})p_A(x).$$
\item Draw $\sigma^{(i+1)}$ from a Metropolis-Hastings sampler with Uniform proposal distribution and target distribution
$$
l(x,A^{(i+1)};\pmb{\mathcal{D}} )p_\sigma (x).
$$
\end{enumerate}
For the fits in Section 4 we used 100 iterations for each inner Metropolis-Hasting sampler and 1000 iterations for the overall Gibbs sampler.

\subsection{Survival function of $F_{\theta,\lambda}$.}\label{survappendix}
Let $\mathcal{C}_{\theta,d}$ be a $d$-dimensional distributional Clayton copula and $\tilde{F}_i$, $i=1,\ldots, d$, a collection of marginal cumulative distribution functions; then the survival function associated to the Clayton distributional copula and marginals is given by
$$
S\left( x_1, \ldots , x_d \right) = 1 - \sum_{i=1}^d \tilde{F}_i(x_i)+ \sum_{j=2}^d (-1)^j \sum_{\stackrel{i_1,\ldots, i_j\in\{1,\dots,d\}}{i_1<\ldots < i_j}}
\mathcal{C}_{\theta, j}(x_{i_1},\ldots, x_{i_j}),
$$ 
see Section 2.6 in \cite{nelsen}.
\bibliographystyle{plainnat}

\end{document}